\documentclass{article}
\usepackage{amsmath,amssymb,graphicx,setspace,amsthm,float,url}
\usepackage{subcaption}
\usepackage{mathtools,blindtext, geometry}
\usepackage{tabularx}
\usepackage{xcolor}
\usepackage[hidelinks]{hyperref}
\usepackage{algorithm}
\usepackage{algpseudocode}
\usepackage{tikz}  
\usepackage[affil-it]{authblk}
\usepackage[numbers,square,sort&compress]{natbib}

\theoremstyle{definition}
\newtheorem{definition}{Definition}[section]

\newtheorem{remark}[definition]{Remark}

\theoremstyle{plain}
\newtheorem{assumption}{Assumption}[section]
\newtheorem{theorem}[definition]{Theorem}
\newtheorem{proposition}[definition]{Proposition}
\newtheorem{lemma}[definition]{Lemma}

\title{\huge A note on diffusive/random-walk behaviour in Metropolis--Hastings algorithms}

\author{Yuxin Liu$^{*, \dagger}$}
\author{Peiyi Zhou$^{*, \ddagger}$}
\author{Samuel Livingstone$^{\S}$}

\affil{Department of Statistical Science, University College London}

\begin{document}

\maketitle

\insert\footins{\noindent\footnotesize $^*$Equal contribution.}
\insert\footins{\noindent\footnotesize $^\dagger$\texttt{y.liu.24@ucl.ac.uk}; Corresponding author}
\insert\footins{\noindent\footnotesize $^\ddagger$\texttt{peiyi.zhou.25@ucl.ac.uk}; Corresponding author}
\insert\footins{\noindent\footnotesize $^\S$\texttt{samuel.livingstone@ucl.ac.uk}}

\begin{abstract}
We prove a general result that if a Metropolis--Hastings algorithm has a proposal that is not geometrically ergodic and the acceptance rate approaches unity at a suitable rate as the state variable becomes large, then the Metropolised chain will also not be geometrically ergodic. Our conditions seem stronger than might be expected, but are shown to be necessary through a counterexample. We then turn our attention to the random walk and guided walk Metropolis algorithms. We show that if the target distribution has polynomial tails the latter converges at twice the polynomial rate of the former, but that if instead the target distribution has strictly convex potential then the random walk Metropolis behaves as a $1/2$-lazy version of the guided walk Metropolis when the state variable is large, and therefore moves at a similar (ballistic) speed.
\end{abstract}


\section{Introduction}

Sampling algorithms and Monte Carlo methods based on Markov processes are an indispensable tool in modern computational research.  A prominent example is the Metropolis--Hastings algorithm, a meta-algorithm from which many popular approaches to Markov chain Monte Carlo can be derived, either as a particular example or in some appropriate limit.  The most basic form of Metropolis--Hastings algorithm will produce a $\pi$-reversible Markov chain, where $\pi$ is the target distribution from which samples are desired, but non-reversible algorithms can straightforwardly be constructed by combining Metropolis--Hastings steps with others in a cyclical manner, leading to a rich suite of methods that has captured the attention of both researchers and practitioners for many decades.

A key question of interest is how to ensure that Markov chain algorithms mix quickly, so that samples and ergodic averages are representative of the target distribution $\pi$.  
Chains that are said to exhibit `random walk' or `diffusive' behaviour take small, directionless steps and uncover little about $\pi$. Such chains are generally shunned because they often mix slowly.  A common approach to reducing random walk behaviour is to augment the state variable with a momentum, typically creating a non-reversible algorithm.  It is known that ergodic averages constructed from a non-reversible Markov chain $P$ will have at least as small an asymptotic variance as a reversible chain formed as $(P + P^*)/2$, where $P^*$ is the $L^2(\pi)$-adjoint of $P$, but the degree of improvement is dependent on both $P$ and $\pi$ \cite{andrieu2016random}.  Mixing times can also be reduced by introducing momentum, but also increased arbitrarily if care is not taken \cite{diaconis2000analysis}.

Diffusive behaviour more generally is an extensively studied phenomenon \citep{metzler2000random}.  A simple categorisation for a stochastic process $(X_t)_{t \geq 0}$ on $\mathbb{R}^d$ is to consider the displacement from its initial value $X_0 := x$ for suitably small $t$.  Typically 
\[
\mathbb{E}\|X_t - x\| \sim t^\alpha,
\]
for some $\alpha > 0$.  The case $\alpha = 1/2$ indicates \textit{diffusive} behaviour, which is characteristic of Brownian motion. If $\alpha < 1/2$ then the process is called \textit{subdiffusive}, and if $\alpha > 1/2$ it is \textit{superdiffusive}.  In the special case $\alpha = 1$ it is called \textit{ballistic}. In any instance that $\alpha \neq 1/2$ the process is said to exhibit \textit{anomalous diffusion}, which is an intense area of mathematical study \citep{metzler2000random, bouchaud1990anomalous}.

A reversible Markov chain Monte Carlo sampling algorithm can behave in a diffusive manner if the transition kernel places most of its mass in a small neighbourhood $N(x)$ of the current state $x$, and the target distribution formally satisfies $\pi(dx) \approx \pi(dy)$ for any $y \in N(x)$. In this case the detailed balance equations dictate that $P(x,dy) \approx P(y,dx)$, meaning the chain will behave in a diffusive manner when this property is combined with the locality of the transition.  A non-reversible algorithm no longer satisfies the detailed balance equations, and so even a transition kernel that still concentrates on $N(x)$ can induce anomalous, often ballistic, behaviour in the same setting.  Reversible Markov chains need not always behave diffusively, however, as the transition is not always restricted to a small neighbourhood of the current state, and the target distribution is not always `flat' in the region that a typical transition is made.  In this case the benefits of non-reversibility are less clear.

We highlight in this article cases in which reversible Metropolis--Hastings algorithms can exhibit diffusive behaviour, and others in which they behave just as non-reversible alternatives do, the key differentiator being the form of $\pi$.  Recall that a single step of a Metropolis--Hastings algorithm consists of drawing a candidate next state for the Markov chain $Y \sim Q(X_i,\cdot)$, where $Q$ is some kernel, and setting $X_{i+1} = Y$ with probability $\alpha(X_i,Y)$, where $\alpha$ is chosen to enforce $\pi$-reversibility.  We begin in Section \ref{sec:MH_high_accept} with a very general result showing that if typical candidate next steps result in $\alpha(X_i,Y) \approx 1$ when $\|X_i\|$ is large, and $Q$ itself is not geometrically ergodic, then the Metropolis--Hastings transition $P$ will typically not be either.  This is a common scenario in which diffusive behaviour can often be observed, because $Q$ alone does not promote fast mixing and $\alpha$ does not influence the dynamics to the degree that $P$ behaves any differently to $Q$.  Surprisingly, however, the condition that $\lim_{\|x\|\to\infty}\int \alpha(x,y)Q(x,dy) = 1$ alone is not sufficient for such a result; we provide a counterexample in which it holds and $Q$ is not geometrically ergodic but $P$ in fact is.  We also discuss extensions in which $\alpha \not\to 1$ but nonetheless the distribution becomes asymptotically flat in some directions, and connect this to the concept of acceleration in sampling and optimisation \cite{oberdorster2025accelerated, bubeck2015convex}.  

In Section \ref{sec: RWM and GWM on 1D R} we focus on two specific algorithms, the random walk and guided walk Metropolis, which can be viewed as a canonical reversible Metropolis--Hastings algorithm and its natural non-reversible counterpart.  If $\pi$ has a density with polynomial tails then it is well-known that the random walk Metropolis can be diffusive while the guided walk exhibits ballistic motion.  We prove that this difference is consequential, in that the guided walk on $\mathbb{R}$ exhibits a faster polynomial rate of convergence to equilibrium.  We then consider the case in which $\pi$ has a strictly log-concave density, meaning that the tails are lighter. In this case the behaviour differs markedly; in the limit $|x|\to\infty$, the random walk Metropolis behaves as a $1/2$-lazy version of the guided walk Metropolis.  Despite $Q$ being a random walk, the acceptance rate qualitatively changes the transition resulting in very little difference in the behaviour of the reversible and non-reversible algorithms during the transient phase, with both exhibiting ballistic motion.

\section{Metropolis--Hastings \& the high-acceptance regime}
\label{sec:MH_high_accept}

The Metropolis--Hastings algorithm provides a simple recipe to construct a Markov transition kernel. Given the current state $X_i$ a candidate kernel $Q$ is used to generate $Y \sim Q(X_i,\cdot)$.  After this, the next state $X_{i+1}$ is set to be $Y$ with probability $\alpha(X_i,Y)$.  There are infinitely many possible choices for $\alpha$, but the most common when both $\pi$ and $Q$ possess densities $\pi(x)$ and $q(x,y)$ with respect to a common dominating measure is $\alpha(x,y) = \min\left( 1, r(x,y)\right)$, where 
\[
r(x,y) := \frac{\pi(y)q(y,x)}{\pi(x)q(x,y)}
\]
is called the Hastings ratio (if $\pi(x)q(x,y) = 0$ then $\alpha(x,y) := 0$).  More generally any choice satisfying $\alpha(y,x) = r(x,y)\alpha(x,y)$ and $\alpha(x,y) \in [0,1]$ for all $x,y$ results in a Metropolis--Hastings transition kernel
\[
P(x,dy) = \alpha(x,y)Q(x,dy) + \left( 1 - \int \alpha(x,y)Q(x,dy) \right) \delta_x(dy)
\]
that is $\pi$-reversible.  The choice $\alpha(x,y) := \min(1, r(x,y))$ is almost always preferred as it minimises the asymptotic variance of ergodic averages \cite{peskun1973optimum,tierney1998note}.  We restrict attention to this choice throughout (though see Remark \ref{rem:general_alpha}).

A non-reversible algorithm that is still $\pi$-invariant can straightforwardly be constructed by cycling through different Metropolis--Hastings kernels in a manner that is not palindromic.  A simple example is to consider two candidate kernels $Q_1$ and $Q_2$ and respective Metropolis--Hastings kernels $P_1$ and $P_2$.  The kernel $P := P_1P_2$ is no longer reversible since $P^* = P_2P_1 \neq P$, but it is clearly still $\pi$-invariant.  One popular approach is to augment the state space to $\xi := (x,p)$, and define $P_1$ as a Metropolis--Hastings kernel with $Q_1(\xi, d\xi') := \delta_{T(\xi)}(d\xi')$ with $T$ an involution (meaning $T^{-1} = T$).  The kernel $P_2$ then only updates the \textit{momentum} $p$, leaving the \textit{position} $x$ unchanged.  Extensions of Peskun's theorem to non-reversible chains are given in \citep{maire2014comparison,andrieu2021peskun}.

Throughout we assume that $\pi$ is defined on $\mathbb{R}^d$.  Recall that a function $f: \mathbb{R}^d \to (0,\infty)$ is called \textit{lower semicontinuous} if for each $c \in \mathbb{R}$, the set $\{x \in \mathbb{R}^d : f(x) \le c\}$ is closed, or equivalently, $x_n\to x$ implies $\liminf_{n\to\infty} f(x_n) \ge f(x)$ (e.g. \cite[Section 2.1]{aliprantis2006infinite}). A function $g:\mathbb{R}^d \to (0,\infty)$ is called \textit{upper semicontinuous} if and only if $1/g(x)$ is lower semicontinuous.  A Markov transition kernel $P$ is called a $T$-chain if there is a distribution $\nu$ on $\mathbb{N}$ and a subtransition kernel $T$ such that $\int P^n(x,A)\nu(dn) \geq T(x,A)$, $T(x,\mathbb{R}^d) > 0$ for all $x$ and the function $f_A(x) := T(x,A)$ is lower semicontinuous for all $A \in \mathcal{B}(\mathbb{R}^d)$. In many cases $\nu = \delta_1$ is a reasonable choice.  The concept of a $T$-chain is introduced in Chapter 6 of \citep{meyn2012markov}, and allows straightforward characterisation of many topological and stability properties of $P$.  

A Markov chain with transition $P$ will be called geometrically ergodic if there is a $\pi$-a.e. finite $V \geq 1$, an $R < \infty$ and $\rho < 1$ such that for all $n \geq 1$ and any $x$ for which $V(x) < \infty$
\begin{equation}
\label{eq:geometric_ergodicity}
\|P^n(x,\cdot) - \pi\|_{V} \leq RV(x)\rho^n,
\end{equation}
where for a signed measure $\mu$ the $V$-norm $\|\mu\|_V := \sup_{|g|\leq V}\left|\int g(x)\mu(dx)\right|$.  It is well-known that among $\pi$-reversible chains geometric ergodicity is equivalent to the existence of a positive $L^2(\pi)$ spectral gap \citep{kontoyiannis2012geometric}.  The theory allows for $V$ to be infinite on a $\pi$-null set, however if an $x$ is chosen such that $V(x) = \infty$ then the bound ceases to be useful.  We therefore use \textit{geometric ergodicity} to refer exclusively to the case in which $V$ is finite for all $x$ in the support of $\pi$; this allows the useful equivalence in Proposition \ref{prop: MH is a T-chain}(ii), following Proposition 3.1 of \citep{roberts1996geometric}.

There are two classical settings in which the Metropolis--Hastings algorithm is well-known to perform poorly. The first is when $\alpha(x,y) \approx 0$ in some region of the state space, meaning almost all proposals are rejected.  It is shown in \cite{roberts1996geometric} that geometric ergodicity cannot then hold.  The other is when $\alpha(x,y) \approx 1$ for almost all proposals $y$ and $x$ sufficiently large, and $Q$ itself is not geometrically ergodic but meanders aimlessly.  This is the canonical `random walk behaviour' scenario, yet to the best of our knowledge no result directly characterising its relation to geometric ergodicity exists (though related results appear in \cite{jarner2003necessary}).  We provide one as Theorem \ref{them:non-geometric-ergodicity}, under mild regularity conditions on $\pi$ and $Q$ stated below.

The following assumption facilitates Proposition \ref{prop: MH is a T-chain} below, a generalisation of Theorem 2.2 in \cite{roberts1996geometric}, from which precise statements about $P$ will follow.


\begin{assumption}
    The distribution $\pi$ is defined on $\mathbb{R}^d$ and the following hold
    \begin{itemize}
        \item[(i)] $\pi$ has a density $\pi(x)$ that is upper semicontinuous and bounded away from 0 and $\infty$ on compact sets.
        \item[(ii)] $Q(x,\cdot)$ has a density $q(x,y)$, both $q(x,\cdot)$ and $q(\cdot,y)$ are lower semicontinuous for all $x$ and $y$, and there exists $\delta_q > 0$ such that for every $x \in \mathbb{R}^d$ \[
    \|x-y\| \leq \delta_q \implies q(x,y) > 0.
    \]
    \end{itemize}
    \label{ass:pi_q_regularities}
\end{assumption}

\begin{proposition}
\label{prop: MH is a T-chain}
    Let $\{X_n\}_{n\geq 0}$ be a Metropolis--Hastings Markov chain on $\mathbb{R}^d$ such that $\pi$ and $Q$ satisfy Assumption \ref{ass:pi_q_regularities}.  Then
    \begin{itemize}
        \item [(i)] $\{X_n\}_{n\geq 0}$ is a $\pi$-irreducible and aperiodic $T$-chain, and therefore all compact sets are small.
        \item [(ii)]  It is geometrically ergodic if and only if there exists a function $V: \mathbb{R}^d\to[1,\infty)$ such that 
        $$
        \limsup_{\|x\|\to\infty} \frac{PV(x)}{V(x)}<1.
        $$
    \end{itemize}
\end{proposition}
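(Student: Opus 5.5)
For (i), take $\nu=\delta_1$ and use the absolutely continuous part of $P$ as the candidate $T$-kernel. Set
\[
T(x,A) := \int_A \alpha(x,y)q(x,y)\,dy = \int_A \min\bigl(q(x,y), \pi(y)q(y,x)/\pi(x)\bigr)\,dy ,
\]
so that $P(x,A)\ge T(x,A)$ holds trivially. Three properties of $T$ must then be checked.

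First, $T(x,\mathbb{R}^d)>0$. For $\|y-x\|\le\delta_q$ we have $q(x,y)>0$ and $q(y,x)>0$ by the symmetric use of Assumption~\ref{ass:pi_q_regularities}(ii). Also $\pi(y)>0$ on the compact ball, since $\pi$ is bounded away from $0$ on compact sets. So the integrand is strictly positive on a ball.

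Second, lower semicontinuity of $x\mapsto T(x,A)$, which I expect to be the only delicate point. Write the integrand as $g(x,y)=\min\{q(x,y),\pi(y)q(y,x)/\pi(x)\}$. For fixed $y$, the map $x\mapsto q(x,y)$ is lsc by assumption, and $x\mapsto q(y,x)$ is lsc since $q(y,\cdot)$ is lsc. Upper semicontinuity of $\pi$, together with positivity, makes $1/\pi(x)$ lsc. A product of nonnegative lsc functions is lsc, and so is a minimum of lsc functions, so $x\mapsto g(x,y)$ is lsc for each $y$. Fatou's lemma then gives, for $x_n\to x$,
\[
\liminf_n T(x_n,A)\ \ge\ \int_A \liminf_n g(x_n,y)\,dy\ \ge\ T(x,A).
\]
Some care is needed with the nonnegativity and finiteness needed for the products; the convention that $\alpha=0$ when $\pi(x)q(x,y)=0$ is harmless because $\pi>0$ everywhere.

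Third, $\pi$-irreducibility and aperiodicity. Since $g>0$ on $\{\|y-x\|\le\delta_q\}$ and $\pi$ is bounded below on compacts, chaining steps of size at most $\delta_q$ shows that any set $A$ of positive Lebesgue measure ($\Leftrightarrow$ positive $\pi$-measure) is reached with positive probability from any $x$ in finitely many steps. This gives $\pi$-irreducibility. Aperiodicity follows because $P(x,B(x,\delta_q))>0$ and $P(x,A)>0$ for sets $A$ of positive measure near $x$, so no cyclic decomposition can exist; alternatively, a small ball $C$ satisfies $P(x,\cdot)\ge\epsilon\,\mathrm{Leb}|_C$ for $x\in C$, giving period one. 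Meyn--Tweedie Theorem 6.2.5 then says that for a $\psi$-irreducible $T$-chain every compact set is petite. Petite sets are small under aperiodicity (Theorem 5.5.7), which completes (i).

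For (ii), adapt Proposition 3.1 of \citep{roberts1996geometric}. For sufficiency, suppose $\limsup_{\|x\|\to\infty}PV/V=\lambda<1$. Pick $\lambda<\lambda'<1$ and $R$ with $PV\le\lambda' V$ outside $C=\{\|x\|\le R\}$. On $C$ the function $PV$ should be bounded; I would either assume $V$ locally bounded or note that $PV\le V$ plus a bounded term is needed. The clean route is to take $V$ bounded on compacts and observe $PV(x)\le \lambda' V(x)+b\mathbf 1_C$, with $b=\sup_C PV<\infty$. The finiteness of $\sup_C PV$ is the point to check; if it fails, replace $V$ by $V$ truncated on $C$, since the hypothesis only concerns large $\|x\|$. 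Since $C$ is small by (i), the drift criterion (Meyn--Tweedie Theorem 15.0.1) yields geometric ergodicity with $V$ finite everywhere.

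For necessity, geometric ergodicity yields (Theorem 15.0.1) a drift function $V\ge1$ finite everywhere on the support, with $PV\le\lambda V+b\mathbf 1_C$ for some petite $C$. The set $C$ can be taken inside a sublevel set $\{V\le M\}$. The difficulty is that $\{V\le M\}$ need not be bounded. I would follow Roberts--Tweedie and replace $V$ with a modified function $V'=V+\|x\|$-type term, or use the standard fact that $V$ may be chosen unbounded off petite sets. Then, outside a compact set, $PV/V\le\lambda+b/V\to\lambda$. This step, ensuring the drift function blows up at infinity, is the main obstacle. I would handle it by using the fact that for a geometrically ergodic chain one can take $V(x)=\mathbb{E}_x[\kappa^{\tau_C}]$ with $C$ compact (any compact set with positive measure is petite, and geometric ergodicity makes every petite set geometrically regular). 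Then $PV\le \kappa^{-1}V$ off $C$ directly, which gives the limsup bound.
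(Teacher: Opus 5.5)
\textbf{Part (i).} This follows the paper's proof essentially step for step. You use the same $T(x,A)=\int_A\alpha(x,y)q(x,y)\,dy$ with $\nu=\delta_1$, the same lower-semicontinuity-plus-Fatou argument, the same chaining argument for irreducibility, and Meyn--Tweedie Theorems 6.2.5 and 5.5.7. Your fallback aperiodicity argument via $P(x,\cdot)\ge\epsilon\,\mathrm{Leb}|_C$ is not available, though. Since $q$ is only lsc in each argument separately, $\alpha(x,y)q(x,y)$ need not be bounded below on $C\times C$. Keep the no-cyclic-decomposition argument instead. It becomes rigorous once you note that a cyclic class $D_1$ would satisfy $P(x,D_1)=0$, hence $\mathrm{Leb}(D_1\cap B(x,\delta_q))=0$, for every $x\in D_1$. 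That is impossible at a Lebesgue density point of $D_1$.

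\textbf{Part (ii), sufficiency: the gap.} The paper does not prove (ii). Once (i) is in hand it simply invokes Proposition~3.1 of \cite{roberts1996geometric}. You instead derive it from Meyn--Tweedie Theorem 15.0.1, and the ``if'' direction does not go through.

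The drift criterion needs $PV\le\lambda' V+b\mathbf 1_C$ with $b=\sup_{x\in C}PV(x)<\infty$. The hypothesis $\limsup_{\|x\|\to\infty}PV/V<1$ controls $PV$ only off $C$. You cannot ``take $V$ bounded on compacts'', because $V$ is given. Even local boundedness of $V$ would not help, since for $x\in C$ the quantity $PV(x)=\int V(y)P(x,dy)$ integrates $V$ over the whole space.

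Truncating $V$ on $C$ fails for the same reason. With $V'=V\mathbf 1_{C^c}+\mathbf 1_C$ you get, for $x\in C$, $PV'(x)=P(x,C)+\int_{C^c}V(y)P(x,dy)$. The second term, which is the one that may be infinite or unbounded in $x$, is untouched.

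So as written you have proved sufficiency only under the extra hypothesis that $PV$ is bounded on compact sets. Either cite Roberts--Tweedie as the paper does, or add that hypothesis; a proof without it would need further structure of $P$ than the drift machinery supplies.

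\textbf{Part (ii), necessity.} There is a slip here. With the return time $\tau_C=\inf\{n\ge1:X_n\in C\}$ and $V(x)=\mathbb{E}_x[\kappa^{\tau_C}]$, one gets
\[
PV(x)=\kappa^{-1}V(x)+\int_C\bigl(V(y)-1\bigr)P(x,dy),
\]
which is not $\le\kappa^{-1}V(x)$ whenever $P(x,C)>0$. Use the hitting time $\sigma_C=\inf\{n\ge0:X_n\in C\}$ instead. Then $V=1$ on $C$ and $PV=\kappa^{-1}V$ exactly off $C$.

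You also need $V(x)<\infty$ at every $x$. Meyn--Tweedie only guarantees geometric regularity up to a $\pi$-null set, not for every petite set. Finiteness everywhere has to come from the paper's convention that the geometric bound holds at every $x$ with a common $\rho$. For example, you can bound $P_x(\sigma_C>n)$ by $P_\pi(\sigma_C>n-m)+\|P^m(x,\cdot)-\pi\|_{TV}$ with $m\approx n/2$.
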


\begin{proof}
See Appendix \ref{app:proofs_MH_high_accept}.
\end{proof}

Our main result of this section is stated as Theorem \ref{them:non-geometric-ergodicity} below.  Intuitively, one might expect that if the acceptance rate of a Metropolis--Hastings algorithm approaches 1 in the tail region, then the transition kernel $P$ will behave similarly to $Q$ in this region. The tail behaviour of $Q$ should then determine the convergence rate of the algorithm. Consequently, if $Q$ fails to be geometrically ergodic and the acceptance rate approaches 1 in the tail, then $P$ should also fail to be geometrically ergodic.  We show below that this reasoning is essentially correct, but that a slightly stronger condition is required (condition (ii) below) than might be expected on the rate at which the acceptance rate tends to unity as $\|x\|\to \infty$.

\begin{theorem}
\label{them:non-geometric-ergodicity}
Let $\pi$ and $Q$ satisfy Assumption \ref{ass:pi_q_regularities}, and suppose in addition
\begin{enumerate}
    \item[(i)] $Q$ is not geometrically ergodic.
    \item[(ii)] For any $V : \mathbb{R}^d \to [1,\infty)$ such that $\int V(x)\pi(dx) < \infty$ it holds that 
    \[
    \lim_{\|x\|\to\infty} \int \frac{V(y)}{V(x)}(\alpha(x,y) - 1) Q(x,dy) = 0.
    \]
\end{enumerate}
Then the Metropolis--Hastings kernel $P$ is not geometrically ergodic.
\end{theorem}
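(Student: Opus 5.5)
We argue by contradiction, using the drift characterisation in Proposition \ref{prop: MH is a T-chain}(ii). Suppose $P$ is geometrically ergodic. By Proposition \ref{prop: MH is a T-chain}(ii) there is then a $V:\mathbb{R}^d\to[1,\infty)$ with $\limsup_{\|x\|\to\infty} PV(x)/V(x) = \lambda < 1$. Before invoking condition (ii) we need $\int V\,d\pi<\infty$. We would take $V$ to be the function produced by the geometric ergodicity/drift theory (e.g.\ Theorem 15.0.1 of \citep{meyn2012markov}), for which $PV \le \lambda' V + b\mathbf{1}_C$ with $C$ compact and hence small. Theorem 14.3.7 of \citep{meyn2012markov} then gives $\pi(V) \le b\,\pi(C)/(1-\lambda') < \infty$. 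If the $V$ in Proposition \ref{prop: MH is a T-chain}(ii) is arbitrary, we first convert it into such a drift condition. Take $R$ large enough that $PV \le \lambda' V$ outside the ball $B_R$, with $\lambda<\lambda'<1$. Inside $B_R$ we need $PV$ bounded; if this fails, we replace $V$ by the one built in the proof of the ``only if'' direction, which is finite everywhere on the support of $\pi$ by our convention and can be chosen with $PV$ bounded on compacts. This step is a bookkeeping point rather than the main difficulty.

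Next we write $P$ in terms of $Q$. From the form of the Metropolis--Hastings kernel,
\[
\frac{PV(x)}{V(x)} = \int \frac{V(y)}{V(x)}\alpha(x,y)\,Q(x,dy) + 1 - \int \alpha(x,y)\,Q(x,dy),
\]
so that
\[
\frac{QV(x)}{V(x)} = \frac{PV(x)}{V(x)} + \int \frac{V(y)}{V(x)}\bigl(1-\alpha(x,y)\bigr)Q(x,dy) - \int \bigl(1-\alpha(x,y)\bigr)Q(x,dy).
\]
Condition (ii) applies to our $\pi$-integrable $V$, so the middle term tends to $0$ as $\|x\|\to\infty$. The last term is nonpositive, so it can simply be dropped. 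Note that this is exactly where the weighted condition (ii) is needed rather than the plain condition $\int\alpha\,dQ\to1$: without the weight $V(y)/V(x)$, the factor $V(y)/V(x)$ could blow up on the rejected region. We conclude
\[
\limsup_{\|x\|\to\infty} \frac{QV(x)}{V(x)} \le \lambda < 1.
\]

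Finally we turn this into geometric ergodicity of $Q$, which contradicts (i). Proposition \ref{prop: MH is a T-chain}(ii) is stated for Metropolis--Hastings chains, but its ``if'' direction only uses that the chain is a $\psi$-irreducible aperiodic $T$-chain in which compact sets are small. For $Q$ this follows from Assumption \ref{ass:pi_q_regularities}(ii): the density $q(x,\cdot)$ is positive on $B(x,\delta_q)$ and $q(\cdot,y)$ is lower semicontinuous, so $T(x,A):=\int_A q(x,y)\,dy$ is lower semicontinuous in $x$ by Fatou's lemma. Irreducibility with respect to Lebesgue measure follows by chaining balls of radius $\delta_q$, and aperiodicity from $q(x,x)>0$.

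We also need $QV$ bounded on compact sets for the drift criterion (Theorem 15.0.1 of \citep{meyn2012markov}). Outside a large ball this is automatic from $QV \le \lambda'' V$. Inside, we use the identity above: $QV \le PV + \int V(y)(1-\alpha)\,Q(x,dy)$. This bound is what requires care, and we expect it to be the main obstacle. If it cannot be obtained, we modify $V$ on the compact set without disturbing the tail ratio, or we use the version of the drift criterion that only requires $QV<\infty$ everywhere. With that in hand, $Q$ is geometrically ergodic, contradicting (i). Hence $P$ is not geometrically ergodic.
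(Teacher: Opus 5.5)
Your argument is correct and is essentially the paper's proof run in contrapositive form. It uses the same identity splitting $PV/V$ into $QV/V$, the nonnegative rejection term $1-\int\alpha\,dQ$ and $D(x)$, with condition (ii) killing $D$ and Proposition~\ref{prop: MH is a T-chain}(ii) applied to both $P$ and $Q$. Your explicit check that the witnessing $V$ is $\pi$-integrable, via the drift inequality and the comparison theorem, fills a point the paper passes over silently. Your worry about $QV$ being bounded on compacts is unnecessary if, as the paper does, you apply the extension of Proposition~\ref{prop: MH is a T-chain}(ii) to $Q$ directly. Your proposed fix of modifying $V$ on a compact set would not in general control $QV$ there anyway, since $QV(x)$ also depends on $V$ in the tails.
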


\begin{proof}
First note that for any suitable $V$
\begin{align}
\label{eq:pv_expression}
    \frac{PV(x)}{V(x)}
    &= \int \frac{V(y)}{V(x)} \alpha(x,y)Q(x,dy) + \int \frac{V(y)}{V(x)} \delta_x(dy)\left(1- \int\alpha(x,y) Q(x,dy)\right)\nonumber\\
    &= \int \frac{V(y)}{V(x)} Q(x,dy) +\left(1- \int\alpha(x,y) Q(x,dy)\right)+D(x)
\end{align}
where $D(x) := \int V(y)(\alpha(x,y)-1)Q(x,dy)/V(x)$.  Under Assumption \ref{ass:pi_q_regularities} the extension of Proposition \ref{prop: MH is a T-chain}(ii) to the Markov chain produced by iterating $Q$ is immediate.   Condition (i) of the theorem statement therefore implies
\[
\limsup_{\|x\|\to\infty}\int \frac{V(y)}{V(x)}Q(x,dy) \geq 1.
\]
The second term in \eqref{eq:pv_expression} is non-negative (although by assumption it asymptotes to 0), and by condition (ii) of the theorem statement $\lim_{\|x\|\to\infty}D(x) = 0$. Combining gives
\begin{equation}
    \limsup_{\|x\|\to\infty}\frac{PV(x)}{V(x)} \geq \limsup_{\|x\|\to\infty}\int\frac{V(y)}{V(x)}Q(x,dy) \geq 1
\end{equation}
From this it is clear that $P$ cannot be geometrically ergodic by applying Proposition \ref{prop: MH is a T-chain}(ii).
\end{proof}

\begin{remark}
    When $Q$ satisfies a random walk type condition, similar results follow from \cite{jarner2003necessary}. Theorem \ref{them:non-geometric-ergodicity} also covers cases where this fails, such as the schemes in \cite{livingstone2021geometric,girolami2011riemann,roberts2002langevin,xifara2014langevin}.
\end{remark}

\begin{remark} \label{rem:general_alpha}
    The above can be straightforwardly extended to the setting in which $\alpha(x,y)$ is not chosen to be the $\min(1, r(x,y))$, provided that certain regularity conditions are still satisfied.  It may then be that $\lim_{\|x\|\to\infty}\int \alpha(x,y)Q(x,dy)  = c$ for some $c < 1$, which results in $P$ being a $(1-c)$-lazy version of $Q$ in the tails.
\end{remark}

It is tempting to conjecture that if $Q$ is not geometrically ergodic and $\lim_{\|x\|\to\infty} \int \alpha(x,y)\,Q(x,dy) = 1$ then $P$ will not be geometrically ergodic. This would provide a weaker sufficient criterion than Theorem \ref{them:non-geometric-ergodicity}(ii), since $\lim_{\|x\|\to\infty} \int \alpha(x,y)\,Q(x,dy) = 1$ is a special case of $\lim_{\|x\|\to\infty} \int V(y)/V(x)[\alpha(x,y)-1]\,Q(x,dy) = 0$ with $V(x) \equiv 1$. This conjecture is false, however, as demonstrated by the following counterexample.

\begin{proposition}
\label{prop: counterexample}
    Take $\pi(x)\propto \exp(-x^2)$ and set
    $$
    Q(x,dy) = [1- \varepsilon(x)] Q_1(x,dy)+\varepsilon(x) Q_2(x,dy),
    $$
    where $\varepsilon(x) = 1/(2+|x|)$, $Q_1(x,\cdot)$ denotes a $N(x/2,1)$ distribution, and $Q_2(x,\cdot)$ denotes a uniform distribution on $(\exp(x^2),\exp(x^2)+1)$.  Then $\lim_{|x|\to\infty} \int \alpha(x,y) Q(x,dy) = 1$ and $Q$ is not geometrically ergodic, but the Metropolis--Hastings kernel $P$ \emph{is} geometrically ergodic.
\end{proposition}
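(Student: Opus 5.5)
The plan is to verify the three claims separately: the acceptance rate tends to $1$, $Q$ is not geometrically ergodic, and $P$ is geometrically ergodic. For the last I will use Proposition \ref{prop: MH is a T-chain}(ii). First I will check Assumption \ref{ass:pi_q_regularities}. The density $\pi$ is continuous and positive. The density $q(x,y) = [1-\varepsilon(x)]\phi(y-x/2) + \varepsilon(x)\mathbf{1}\{e^{x^2}<y<e^{x^2}+1\}$ is lower semicontinuous in each argument, because the indicator of an open set is lower semicontinuous. It is also positive near the diagonal because of the Gaussian part.

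\textbf{Acceptance rate.} For the $Q_2$ component, acceptance may be near zero, but this component carries weight $\varepsilon(x)\to 0$. For the $Q_1$ component I will bound $r(x,y)$ from below for $y$ in the bulk $|y - x/2|\le M$, with $|x|$ large.
\begin{itemize}
\item In the numerator, $q(y,x) \ge [1-\varepsilon(y)]\phi(x-y/2)$.
\item In the denominator, the $Q_2$ term at $y\approx x/2$ vanishes, since $x/2 \notin (e^{x^2},e^{x^2}+1)$. So $q(x,y) = [1-\varepsilon(x)]\phi(y-x/2)$.
\item A direct computation gives
\[
\log\frac{\pi(y)\phi(x-y/2)}{\pi(x)\phi(y-x/2)} = \tfrac58 (x^2-y^2).
\]
This tends to $+\infty$ uniformly on the bulk, because $|y|\approx |x|/2$ there.
\end{itemize}
Hence $\alpha\to1$ uniformly on the bulk. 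Letting $M\to\infty$ after $|x|\to\infty$ gives $\int\alpha\,dQ\to1$.

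\textbf{$P$ is geometrically ergodic.} Take $V(x)=e^{s|x|}$ for some $s>0$, and split $PV(x)/V(x)$ into three pieces.
\begin{itemize}
\item The $Q_1$ part is at most $\int e^{s(|y|-|x|)}Q_1(x,dy) \approx e^{-s|x|/2}e^{s^2/2}\to0$.
\item The rejection term is at most $1-\int\alpha\,dQ \to 0$, by the previous step.
\item The accepted $Q_2$ part uses $\alpha(x,y)\le r(x,y)$. Here $r(x,y) \le \pi(y)\sup q(y,\cdot)/(\pi(x)\varepsilon(x))$, and $\pi(y)\le \exp(-e^{2x^2})$. So this part is at most
\[
\varepsilon(x)\cdot\frac{C\exp(-e^{2x^2})\,e^{s(e^{x^2}+1)}}{\varepsilon(x)\,e^{-x^2}\,e^{s|x|}},
\]
which tends to $0$.
\end{itemize}
Thus $\limsup PV/V = 0 < 1$, and Proposition \ref{prop: MH is a T-chain}(ii) applies.

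\textbf{$Q$ is not geometrically ergodic.} I expect this to be the main obstacle, since one must rule out every $V$. I would argue through return times instead. If $Q$ were geometrically ergodic, then by the $T$-chain structure and \citep[Theorem 15.0.1]{meyn2012markov} there would be a compact $C=[-c,c]$ and a $\kappa>1$ such that $\mathbb{E}_x[\kappa^{\tau_C}]<\infty$ for all $x$. I will show this expectation is infinite for every $\kappa>1$.

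The key observation is that $Q_1$ moves from $z$ to roughly $z/2$ plus noise, so reaching $C$ from a large $z$ takes about $\log_2 z$ steps. I would make this precise with a one-step bound: with high probability $|X_{n+1}| \ge |X_n|/2 - 1$. It follows that $\tau_C \ge c'\log z$ with probability bounded below, whether or not further $Q_2$ jumps occur, since those only move the chain further out.

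Now start from any $x$. With positive probability, say $p_0$, the chain reaches some $w$ in a fixed interval $[a,a+1]$ with $a$ large. From there it jumps with probability at least $\varepsilon(w)$ to $y\in(e^{w^2},e^{w^2}+1)$. From $y$ it jumps with probability $\varepsilon(y)\approx e^{-w^2}$ to $z\approx \exp(y^2)\ge \exp(e^{2w^2})$. From $z$, the return time satisfies $\tau_C\gtrsim \log z \ge e^{2w^2}$. Hence
\[
\mathbb{E}_x[\kappa^{\tau_C}] \;\gtrsim\; p_0\,\varepsilon(w)\,\varepsilon(y)\,\kappa^{c' e^{2w^2}} \;\approx\; p_0\,\varepsilon(w)\,e^{-w^2}\,\kappa^{c'e^{2w^2}}.
\]
Repeating this argument over the infinitely many disjoint bands of $w$ (or letting $a\to\infty$, where the right-hand side is unbounded) shows the expectation is infinite. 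This contradicts geometric ergodicity.
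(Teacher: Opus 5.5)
Your proposal is correct, but it reaches the key step by a different route from the paper. The acceptance-rate part is the paper's argument: the same $\tfrac58(x^2-y^2)$ computation on the bulk of $Q_1$. You also verify Assumption~\ref{ass:pi_q_regularities}, which the paper leaves implicit. For geometric ergodicity of $P$ you take $V(x)=e^{s|x|}$ rather than the paper's $V(x)=\exp(x^2/4)$. Both give $\limsup_{|x|\to\infty}PV(x)/V(x)=0$. Your choice leaves more slack in the $Q_2$ term, where the crude bound $q(x,y)\ge\varepsilon(x)$ suffices in place of the paper's $\alpha(x,y)\le 2\exp(\tfrac58x^2-\tfrac58y^2)$.

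The real difference is in showing $Q$ is not geometrically ergodic. The paper argues through the drift function. Lemma~\ref{lem:ar1_vgrowth} (via $\log V\ge\varepsilon\mathbb{E}_x[\tau_C]$ for the pure AR(1) chain) gives $V(x)\ge c|x|^\eta$. Two applications of $\varepsilon(x)Q_2V(x)\le\rho V(x)$ bootstrap this to $V(x)\ge\exp(\eta''e^{2x^2})$, and then $Q_1V(x)=\infty$. You argue through the return time, following an explicit path: a Gaussian step into $[a,a+1]$, two $Q_2$ jumps, then a descent of order $\log z$ steps. The two proofs are dual. Each bootstrap step corresponds to one of your jumps, and the paper's final divergence $Q_1V=\infty$ corresponds to your limit $a\to\infty$, where the Gaussian cost $p_0(a)\approx e^{-(a-x/2)^2/2}$ is overwhelmed by $\kappa^{c'e^{2a^2}}$.

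Your version is more transparent and avoids the Mengersen--Tweedie bound. In exchange it needs two things the paper does not:
\begin{itemize}
\item a descent-time estimate for the full mixture chain $Q$, not just for $Q_1$;
\item finiteness of $\mathbb{E}_x[\kappa^{\tau_C}]$ at a point $x\notin C$.
\end{itemize}
For the second, use the drift function from Proposition~\ref{prop: MH is a T-chain}(ii) applied to $Q$, as the paper itself does. With $QV\le\rho V$ off a compact $C$, optional stopping for the supermartingale $\rho^{-(n\wedge\tau_C)}V(X_{n\wedge\tau_C})$ gives $\mathbb{E}_x[\rho^{-\tau_C}]\le V(x)$ for $x\notin C$.

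Two details need repairing when you write this out.
\begin{itemize}
\item The one-step claim that $|X_{n+1}|\ge|X_n|/2-1$ holds with high probability is false as stated. Under $Q_1$ it holds with probability about $0.68$, which does not survive $\log_2 z$ steps. Use instead the pathwise inequality $|X_{n+1}|\ge|X_n|/2-|\xi_{n+1}|$, where $\xi_{n+1}$ is the Gaussian variable of step $n+1$. This also holds after a $Q_2$ jump, since $e^{x^2}>|x|$. Iterate it to $|X_n|\ge z2^{-n}-2\max_{k\le n}|\xi_k|$, and bound the maximum of $N=\lfloor\tfrac12\log_2 z\rfloor$ Gaussians. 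This gives $\tau_C>N$ with probability at least $1/2$ for all large $z$.
\item $p_0$ depends on $a$, and you should state that it decays only at a Gaussian rate. For fixed $a$ your lower bound is finite, so this rate is exactly what makes the right-hand side unbounded as $a\to\infty$.
\end{itemize}
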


\begin{proof}
See Appendix \ref{app:proofs_MH_high_accept}.
\end{proof}

The intuitive explanation for Proposition \ref{prop: counterexample} is as follows. The candidate kernel $Q(x,dy)$ is a mixture of a well-behaved distribution $Q_1(x,dy)$ with probability $1-\varepsilon(x)$ and a large jump that occurs with probability $\varepsilon(x)$. In the tail region, nearly all proposals from $Q_1(x,dy)$ are accepted, while all large jumps are rejected. Since $\lim_{|x|\to\infty} \varepsilon(x) = 0$, almost all proposals in the tail come from $Q_1(x,dy)$, causing the average acceptance rate to approach 1. Although the large jump has low probability, when it occurs $V(y)$ becomes extremely large (where $y$ is the proposed state). This prevents $Q$ from being geometrically ergodic. The Metropolis--Hastings algorithm, however, rejects nearly all such large jumps, making the transition kernel $P$ behave as $Q_1$, which is geometrically ergodic.

A natural extension of Theorem \ref{them:non-geometric-ergodicity} is to consider the case in which $\pi$ has a density that becomes asymptotically flat in only some directions.  This is common in statistical models in the presence of near non-identifiability. As a simple example let $\pi(x) = \pi_1(x_1)\pi_2(x_2)$ and $Q(x,dy) := Q_1(x_1,dy_1)Q_2(x_2,dy_2)$ with a symmetric transition density.  If $\pi_1$ has polynomial tails then taking the sequence $x^{(n)} := (n, x_2)$ for $x_2$ fixed it holds that 
\[
\lim_{n\to\infty}\int \alpha(x^{(n)}, y)Q(x^{(n)}, dy) = \int \min\left(1, \frac{\pi_2(y_2)}{\pi_2(x_2)}\right)Q_2(x_2, dy_2).
\]
We do not therefore have that $\alpha \to 1$, but rather that it becomes asymptotically independent of $x_1$ in some directions. The $x_1$-coordinate process will then behave as an $x_2$-modulated lazy version of $Q_1$ in some areas of $\mathbb{R}^2$.  When $Q$ is a random walk kernel, for example, it can be shown that such chains will not be geometrically ergodic using the results of \cite{jarner2003necessary}.  

The previous paragraph connects with the popular idea of acceleration in sampling and optimisation. In the sampling context, take $\pi$ to have an $m$-strongly log-concave density $\pi(x)$ for which $\nabla \log\pi(x)$ is gradient-Lipschitz with constant $M$. A condition number $\kappa := M/m \geq 1$ can then be associated with $\pi$, and often connected with the mixing time of a Markov chain algorithm targeting $\pi$. Often the mixing time is linear in $\kappa$, but for some \textit{accelerated} algorithms it is reduced to $\sqrt{\kappa}$.  Large $\kappa$ typically means $\pi$ is flatter in some directions than others, and mixing is determined by how well these flatter directions are explored. The square root improvement follows directly from the fact that accelerated chains move ballistically rather than diffusively in these directions (see e.g. \cite{oberdorster2025accelerated}).  Linear pre-conditioning can reduce the condition number and thereby increase sampling efficiency \cite{hird2025quantifying}. Similarly, applying a nonlinear transformation to induce faster-decaying tails can recover geometric ergodicity \cite{johnson2012variable}.

\section{Random \& Guided Walk Metropolis on $\mathbb{R}$}
\label{sec: RWM and GWM on 1D R}

We now turn to the random walk and guided walk Metropolis on $\mathbb{R}$. The (Gaussian) random walk Metropolis on $\mathbb{R}$ is the case of Metropolis--Hastings in which $Q(x,\cdot)$ is $N(x, \epsilon^2)$. The candidate density $q(x,y)$ is then symmetric in its arguments, so the Hastings ratio reduces to $r(x,y) = \pi(y)/\pi(x)$. The algorithm is well-known and well-studied (e.g. \cite{sherlock2010random}).

The guided walk Metropolis was introduced in \cite{GWM_intro} as a simple way to incorporate momentum into a sampling algorithm.  The state space is augmented to be $\mathbb{R} \times \{-1,+1\}$ with the new state variable $\xi := (x,p)$ such that $p \in \{-1,+1\}$ enforces a direction to the transitions of the Markov chain.  The algorithm produces a Markov transition from the current state $(X_i,P_i)$ by generating the proposal
\[
Y = X_i + \epsilon P_i|Z|
\]
for some $Z \sim N(0,1)$, and then setting 
\[
(X_{i+1},P_{i+1}) 
= 
\begin{cases}
    (Y,P_i), & \text{with probability } \alpha(X_i,Y) \\
    (X_i,-P_i), & \text{otherwise},
\end{cases}
\]
where $\alpha(X_i,Y) = \min(1, \pi(Y)/\pi(X_i))$ as in the random walk Metropolis.  Iterating produces a Markov chain on $\mathbb{R} \times \{-1,+1\}$ with invariant distribution $\mu(d\xi) = \pi(dx)\nu(dp)$, where $\nu$ denotes the uniform distribution on $\{-1,+1\}$.  The guided walk transition can be viewed as the two-skeleton of a two-cycle Metropolis--Hastings kernel $P_1P_2$, where $P_1$ has candidate kernel $Q_1(\xi,d\xi') = \int \delta_{(x+\epsilon p|z|,-p)}(dx',dp')\phi(z)dz$, with $\phi$ denoting the standard Normal density on $\mathbb{R}$, and $Q_2$ consists of keeping $x$ fixed and setting $p \to -p$, a candidate transition that will always be accepted owing to the symmetry of $\nu$.  Further discussion on the general construction of such deterministic transitions is given in \cite{tierney1998note,andrieu2020general, glatt2023accept}.

The key advantage of the guided walk is that if a proposal $Y$ is accepted then the chain continues moving in the same direction.  If the chain is in a region for which $\pi(x)$ is relatively flat, therefore, this will promote ballistic motion when the random walk Metropolis will instead behave diffusively.  In \cite{andrieu2021peskun} it is shown that if $P$ denotes the guided walk transition then $(P+P^*)/2$ is the random walk Metropolis (or rather an augmented version on $\mathbb{R} \times \{-1,+1\}$ for which the momentum is completely refreshed at every iteration).  When a proposal is rejected, however, the direction of travel in the guided walk is changed, and so in the presence of a non-trivial $\alpha$ the degree of benefit over the random walk Metropolis is not so clear.  We study both the polynomial-tailed case, in which $\pi$ becomes flat for large $|x|$, and the strictly log-concave case, in which $-\log \pi(x)$ grows faster than linearly in $|x|$.

\subsection{Polynomial tails}
\label{subsec:heavy_tailed_target}

In this section we assume $\pi$ satisfies the following condition, which is representative of many heavy-tailed distributions on $\mathbb{R}$.
\begin{assumption}
\label{ass:heavy-tailed}
Assumption \ref{ass:pi_q_regularities}(i) holds and in addition there exist constants $r>0$, $K>0$, and $C_0>0$ such that for $|x|>K$,
$$
\pi(x) = \frac{C_0}{|x|^{1+r}}.
$$  
\end{assumption}

A Markov chain is called polynomially ergodic if there is a constant $\beta > 0$ and real-valued functions $V_\beta \geq 1$, $M > 0$  such that for all $n \geq 1$
\begin{equation}
\label{eq:polynomial_ergodicity}
\|P^n(x,\cdot) - \pi\|_{V_\beta} \leq M(x) n^{-\beta}.
\end{equation}
Note that in this case $V_\beta$ depends on $\beta$, and in fact there is an inherent trade-off between the rate of convergence and its choice \citep{jarner2007convergence}. We therefore focus on the total variation norm, i.e. choosing $V_\beta \equiv 1$.  The largest choice of $\beta$ for which \eqref{eq:polynomial_ergodicity} holds with $V_\beta \equiv 1$ is called the \textit{polynomial rate}.

Under Assumption \ref{ass:heavy-tailed} the following result is proven in \citep{jarner2007convergence} about the random walk Metropolis.

\begin{proposition}[Jarner \& Roberts, 2007]
    Under Assumption \ref{ass:heavy-tailed} the (Gaussian) random walk Metropolis is polynomially ergodic with rate $r/2$.
\end{proposition}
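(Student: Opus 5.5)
This is the Jarner--Roberts result, so the plan follows their route: build a family of polynomial drift conditions for the random walk Metropolis and pass them through the general theory of polynomial ergodicity. By Proposition \ref{prop: MH is a T-chain}(i), the chain is a $\pi$-irreducible aperiodic $T$-chain and compact sets are small. It is therefore enough to establish, for test functions $V_s(x) := 1+|x|^s$, a drift inequality of the form
\[
PV_s(x) - V_s(x) \le -c\, V_s(x)^{1-2/s} + b\,\mathbf{1}_C(x)
\]
with $C$ compact. The exponent $1-2/s$ reflects that $V_s^{1-2/s}\asymp |x|^{s-2}$. Once this holds, the Jarner--Roberts theorem on polynomial drift conditions (or Tuominen--Tweedie subgeometric rates) gives total variation convergence at rate $n^{-\gamma}$, where $\gamma$ is governed by the largest admissible $s$. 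Concretely, drift with exponent $\alpha = 1-2/s$ yields rate $\alpha/(1-\alpha) = (s-2)/2$. Pushing $s$ up to $2+r$ gives rate $r/2$.

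The core computation is a Taylor expansion of $PV_s(x) - V_s(x)$ for large $x>0$; the case $x<0$ is symmetric. Split the proposal increment $y - x = \epsilon Z$ into $|Z| \le |x|^{1/2}$ and its complement, where Gaussian tails make the complement negligible against any polynomial.
\begin{itemize}
\item On the bulk region, $\pi(x+\epsilon z)/\pi(x) = (1+\epsilon z/x)^{-(1+r)}$. Hence every upward move is accepted with probability $1 - (1+r)\epsilon z/x + O(z^2/x^2)$, while downward moves are always accepted.
\item Expanding $V_s(x+\epsilon z) - V_s(x) = s x^{s-1}\epsilon z + \tfrac12 s(s-1)x^{s-2}\epsilon^2 z^2 + O(x^{s-3}|z|^3)$ and integrating against the acceptance gives
\[
PV_s(x) - V_s(x) \approx \tfrac{\epsilon^2}{2}\, s\, x^{s-2}\bigl[(s-1) - (1+r)\bigr] = \tfrac{\epsilon^2}{2}\, s\,(s-2-r)\, x^{s-2}.
\end{itemize}
This is negative of order $|x|^{s-2}$ exactly when $s < 2+r$. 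That is the required drift, and taking $s \uparrow 2+r$ gives every rate below $r/2$. The main obstacle is carrying out this expansion uniformly with rigorous remainder control. In particular, one must check that the $O(z^2/x^2)$ term in the acceptance probability, multiplied by $x^{s-1}z$, stays $o(x^{s-2})$. One must also bound the contribution of $|Z|>|x|^{1/2}$, where $V_s(x+\epsilon Z)$ can be large but Gaussian tails dominate.

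Attaining the rate $r/2$ exactly also requires a matching lower bound, i.e.\ that no faster polynomial rate holds. For this I would use the lower-bound argument of Jarner and Roberts. Near level $x$ the chain has increments of variance at most $\epsilon^2$, so the time to return from $x$ to a compact set is at least of order $x^2$. Combined with the tail mass $\pi(|X|>x)\asymp x^{-r}$, started from $\pi$-typical points (or small sets) the total variation distance at time $n$ is at least $\pi(|X|>c\sqrt n)\asymp n^{-r/2}$. So the rate cannot exceed $r/2$. The upper-bound drift computation remains the central step; the lower bound is a comparatively standard hitting-time argument.
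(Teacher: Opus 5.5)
The paper gives no proof of this statement; it is quoted from Jarner \& Roberts (2007). Your first half is the drift argument behind that result, and it is correct. With $V_s=1+|x|^s$ the expansion does give $PV_s(x)-V_s(x)=\tfrac{\epsilon^2}{2}s(s-2-r)|x|^{s-2}+O(|x|^{s-3})$. Compact sets are small by Proposition~\ref{prop: MH is a T-chain}(i). Theorem~3.6 of Jarner \& Roberts (2002), applied with exponent $1-2/s$ for $s\in(2,2+r)$, then gives every rate below $r/2$. This is the same mechanism as the paper's Lemma~\ref{lem:lowerbound_polynomialtails} for the guided walk. There the drift of $\pi^{-\beta}\asymp|x|^{(1+r)\beta}$ is first order, so the threshold is $1+r$ and the rate is $s-1$.

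The optimality half is where your sketch falls short, and one inference in it is false as stated. Increments of variance at most $\epsilon^2$ do not by themselves force return times of order $x^2$. The light-tailed random walk Metropolis of Section~\ref{subsec:light_tailed_target} has the same increments, yet returns in $O(|x|)$ steps. You also need the inward drift to be $O(1/|x|)$, and here this follows from your own expansion. For large $x>0$, rejections occur only when $Z>0$ or $x+\epsilon Z<-x$, and rejecting the latter can only raise the mean increment. Hence $\mathbb{E}[X_1-x\mid X_0=x]\ge-(1+r)\epsilon^2/(2x)$. Combining this with Doob's maximal inequality for the martingale part gives, for a fixed bounded set $B$ with hitting time $\tau_B$,
\[
\sup_{|y|>c\sqrt n}\mathbb{P}_y(\tau_B\le n)\le 16\epsilon^2/c^2 \quad\text{once } c^2\ge 4(1+r)\epsilon^2 \text{ and } n \text{ is large.}
\]

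Second, a return-time bound concerns chains started far out, while your conclusion concerns chains started near the centre. The bridge you did not state is reversibility. With $A_n=\{|y|>c\sqrt n\}$,
\[
\int_B\pi(dx)P^n(x,A_n)=\int_{A_n}\pi(dy)P^n(y,B)\le\pi(A_n)\sup_{y\in A_n}\mathbb{P}_y(\tau_B\le n)\le\frac{16\epsilon^2}{c^2}\,\pi(A_n).
\]
Choosing $c$ with $16\epsilon^2/c^2\le\pi(B)/2$ makes the average of $\|P^n(x,\cdot)-\pi\|_{TV}$ over $x\sim\pi(\cdot\mid B)$ at least $\pi(A_n)/2\asymp n^{-r/2}$. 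Taking $B$ to be a bounded set of positive measure on which $M$ in \eqref{eq:polynomial_ergodicity} is bounded then excludes every $\beta>r/2$.

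Note that the direct route of Lemma~\ref{lem:upperbound_polynomialtails}, which localises the chain via $|X_n-x_0|\le\epsilon\sum_i|Z_i|$, only works at scale $n$. Here it would give only rate $\le r$. The $\sqrt n$ localisation is exactly what the drift-plus-martingale control provides.
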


For the guided walk Metropolis, by contrast, we establish a faster polynomial rate of convergence.

\begin{proposition}
    Under Assumption \ref{ass:heavy-tailed} the guided walk Metropolis is polynomially ergodic with rate $r$.
\end{proposition}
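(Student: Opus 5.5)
We prove polynomial ergodicity at rate $r$ with a drift condition of Jarner--Roberts type: find $V\ge1$, a small set $C$ and constants $c>0$, $b<\infty$ with
\[
PV\le V-cV^{\eta}+b\mathbf 1_C,\qquad 0<\eta<1 .
\]
By the subgeometric theory of \citep{jarner2007convergence} (Douc--Fort--Moulines--Soulier), this gives total variation convergence at polynomial rate $\eta/(1-\eta)$. The upper bound therefore needs $V$ and $\eta$ with $\eta/(1-\eta)$ arbitrarily close to $r$.

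\textbf{Small sets.} The guided walk is not reversible, so we work directly on $\mathbb{R}\times\{-1,+1\}$.
\begin{itemize}
\item Two steps of the chain let the position move in either direction with a Gaussian-type density; at worst a rejection flips $p$ and then a move occurs.
\item With Assumption \ref{ass:pi_q_regularities}(i), a variant of Proposition \ref{prop: MH is a T-chain}(i) then shows that $[-K',K']\times\{\pm1\}$ is small and the chain is $\mu$-irreducible and aperiodic.
\end{itemize}

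\textbf{Drift function.} The ballistic intuition is as follows.
\begin{itemize}
\item From $(x,p)$ with $p$ pointing inward ($p\,\mathrm{sign}(x)=-1$), proposals move toward the origin, where $\pi$ is larger, so they are always accepted for $|x|>K+$ const. The chain then reaches the centre deterministically in time $O(|x|)$.
\item From an outward momentum, each proposal $y=x+\epsilon|Z|\,\mathrm{sign}(x)$ is accepted with probability $\approx (|x|/|y|)^{1+r}\approx 1-(1+r)\epsilon|Z|/|x|$. Rejection (a flip) therefore happens at rate $\approx (1+r)\epsilon\,\mathbb{E}|Z|/|x|$ per step.
\item The outward excursion length before flipping, started from $|x|$, has a tail like a Pareto with exponent tied to $1+r$. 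In continuous-time scaling, $d|x|=\epsilon\mathbb{E}|Z|\,dt$ with flip hazard $(1+r)/|x|$ per unit distance, so the flip location $L$ satisfies $P(L>\ell)=(|x|/\ell)^{1+r}$.
\end{itemize}
Hence the expected return time from $|x|$ is of order $|x|$, and the hitting time has moments of order up to $1+r$. This suggests taking
\[
V(x,p)=|x|^{s}\bigl(1+a\,p\,\mathrm{sign}(x)\bigr)\quad\text{for }|x|>K',\qquad a\in(0,1),
\]
with $s<1+r$. Outward momentum is weighted more heavily, so $a>0$ for the outward state. The aim is a drift $PV-V\le -c|x|^{s-1}=-cV^{(s-1)/s}$.

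\textbf{Key computation.}
\begin{itemize}
\item Inward state: the move is accepted, so $|x|$ decreases by $\epsilon\mathbb{E}|Z|$. The change is $\approx -s(1-a)\epsilon\mathbb{E}|Z|\,|x|^{s-1}$.
\item Outward state: $|x|^s(1+a)$ grows by $s(1+a)\epsilon\mathbb{E}|Z|\,|x|^{s-1}$ on acceptance. It loses $2a|x|^s$ with probability $\approx(1+r)\epsilon\mathbb{E}|Z|/|x|$ on rejection.
\item The outward drift coefficient is therefore $\epsilon\mathbb{E}|Z|\,|x|^{s-1}\bigl[s(1+a)-2a(1+r)\bigr]$. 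This is negative iff $a>s/(2(1+r)-s)$, and such an $a<1$ exists exactly when $s<1+r$.
\end{itemize}
Thus $\eta=(s-1)/s$, and the rate $\eta/(1-\eta)=s-1\uparrow r$, giving polynomial ergodicity at any rate below $r$. Attaining rate exactly $r$, as stated, would need a sharper argument: either the borderline function $|x|^{1+r}$ with logarithmic corrections, or using that the stationary tail $\pi(|x|>t)\sim t^{-r}$ controls the rate as in \citep{jarner2007convergence}.

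\textbf{Lower bound and main obstacle.} For the matching lower bound, the chain started near the origin cannot reach distance $t$ in fewer than about $t/(\epsilon\mathbb{E}|Z|)$ steps. So $\|P^n(x,\cdot)-\pi\|_{TV}\ge\pi(|x|>Cn)-P^n(x,\{|x|>Cn\})\gtrsim n^{-r}$; this uses Gaussian tail control on $|Z|$ to make the second term negligible. The main obstacle is the Taylor-expansion step: the acceptance and Lyapunov increments must be controlled uniformly. In particular, the error terms in $(1+\epsilon|Z|/|x|)^{s}$ and $(1+\epsilon|Z|/|x|)^{-(1+r)}$ must be integrated against the Gaussian and shown to be $o(|x|^{s-1})$. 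I would handle this by splitting on $|Z|\le|x|^{1/2}$ and its complement, with a crude bound on the latter.
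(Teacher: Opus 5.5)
Your proposal is correct and is essentially the paper's argument. The paper's Lyapunov function $V(x,p)=e^{\delta\operatorname{sgn}(x)p}\pi(x)^{-\beta}$ is your $|x|^{s}(1+a\,p\operatorname{sgn}(x))$ under $s=(r+1)\beta$ and $e^{2\delta}=(1+a)/(1-a)$, and its negativity condition $e^{-2\delta}<1-\beta$ is exactly your $a>s/(2(1+r)-s)$, so it also yields every rate $s-1<r$. Its matching bound is the same speed-limit argument, comparing the exponentially small probability that $\sum_i|Z_i|>kn$ with $\pi(A_n)\asymp n^{-r}$. On attaining $r$ exactly, the paper does no more than you: it takes the supremum over the admissible rates $s-1\uparrow r$, consistent with its definition of the polynomial rate, so no logarithmic correction is used or needed.
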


\begin{proof}
    In Lemma \ref{lem:lowerbound_polynomialtails} we show that the polynomial rate is at least $r$, and in Lemma \ref{lem:upperbound_polynomialtails} we show that it is $\leq r$.  Both results can be found in Appendix \ref{app:proofs_gwm_polynomialtails}.
\end{proof}

The proof follows a similar argument to that used in \citep{vasdekis2022note} to study a related continuous-time process.  The result shows that in this setting a doubling in rate is achieved.  The guided walk Metropolis is a \textit{lifted} version of the random walk Metropolis, and it is known that when the state space is finite or compact a lifted Markov chain must have mixing time at least the square root of the original chain, which translates to at most a doubling of the rate \cite{chen1999lifting,ramanan2018bounds}.

\subsection{Strictly log-concave tails}
\label{subsec:light_tailed_target}

We now consider the case of $\pi$ having a density and write $U(x) := -\log\pi(x)$ to denote the associated \textit{potential} function.  Assume the following strict convexity and super-linear growth conditions.
\begin{assumption}
\label{ass:strict_convexity}
    The target distribution $\pi$ has a Lebesgue density $\pi(x) \propto e^{-U(x)}$ on $\mathbb{R}$ with $U \in C^1(\mathbb{R})$, $U'$ is strictly monotonically increasing and
    \begin{equation*}
    \lim_{|x| \to \infty}\frac{U(x)}{|x|} = \infty.
    \end{equation*}
\end{assumption} 
It follows straightforwardly from Assumption \ref{ass:strict_convexity} that $U'(x) \to \infty$ as $x \to \infty$ and $U'(x) \to -\infty$ as $x \to -\infty$.

Recall that in general $\check{P}$ is called an $\varepsilon$-lazy version of $P$ if $\check{P}(x,\cdot) := (1-\varepsilon)P(x,\cdot) + \varepsilon \delta_x(\cdot)$ for all $x$.  The main result of this section is the following proposition, which shows that in the typical light-tailed scenario characterised by Assumption \ref{ass:strict_convexity} the random walk Metropolis behaves just as a $1/2$-lazy version of the guided walk Metropolis for large $|x|$.  

\begin{proposition}
\label{prop:nstep_coupling}
Let $\mathcal{L}(X_n)$ and $\mathcal{L}(W_n)$ denote the laws of the $n$th states of the random walk Metropolis chain $\{X_n\}_{n \geq 0}$ and the $1/2$-lazy guided walk Metropolis chain $\{(W_n,P_n)\}_{n\geq 0}$.  If $\pi$ satisfies Assumption \ref{ass:strict_convexity}, $X_0 = W_0 = x$ and $P_0 = -\operatorname{sgn}(x)$ then for any fixed $n \in \mathbb{N}$
\begin{equation*}
\lim_{|x|\to \infty} \|\mathcal{L}(X_n) - \mathcal{L}(W_n)\|_{TV} = 0.
\end{equation*}
If instead $X_0 = W_0 = x$ and $P_0 = \operatorname{sgn}(x)$, then
\begin{equation*}
\lim_{|x|\to \infty} \|\mathcal{L}(X_{n-1}) - \mathcal{L}(W_n)\|_{TV} = 0.
\end{equation*}
\end{proposition}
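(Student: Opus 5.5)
We construct an explicit coupling of the two chains for fixed $n$ and show that, as $|x|\to\infty$, it succeeds with probability tending to one. By symmetry it suffices to take $x\to+\infty$. The key observation is local. For $x$ large, a random walk Metropolis proposal $Y=x+\epsilon Z$ with $Z<0$ (towards the mode) satisfies $U(Y)<U(x)$ whenever $|Z|$ is not too large relative to $x$, so it is accepted with probability one. A proposal with $Z>0$ is accepted with probability $\exp(-(U(x+\epsilon Z)-U(x)))\le \exp(-\epsilon Z\,U'(x))$ by convexity, and since $U'(x)\to\infty$ this tends to zero for each fixed $Z>0$. Hence, asymptotically, the random walk Metropolis from large $x$ moves to $x-\epsilon|Z|$ with probability $1/2$ and stays put with probability $1/2$.

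For the guided walk with $P_0=-\operatorname{sgn}(x)=-1$, the proposal is $x-\epsilon|Z|$, which is accepted with probability tending to one. The $1/2$-lazy version therefore also moves to $x-\epsilon|Z|$ with probability $1/2$ and stays with probability $1/2$, with momentum unchanged in either case. We couple one step by using the same $|Z|$ and letting the laziness coin be the sign of $Z$. The step laws then agree except on the events
\[
\{Z>0,\ \text{RWM accepts}\}\cup\{Z<0,\ \text{GWM rejects}\}\cup\{Z<0,\ \text{RWM rejects}\}.
\]
The probability of this union is bounded by $\int_0^\infty \phi(z)\bigl[e^{-\epsilon z U'(x)}+\mathbf 1\{U(x-\epsilon z)>U(x)\}\bigr]\,dz$, up to a factor of two. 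This tends to zero by dominated convergence: $U'(x)\to\infty$, and for fixed $z$ we have $U(x-\epsilon z)<U(x)$ once $x>$ the minimiser $+\epsilon z$, since $U$ is strictly convex with a unique minimiser.

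To handle $n$ steps, we iterate the one-step coupling. On the event that all previous steps coupled, the chains share the current state $x_k$ and the guided walk keeps $P_k=-1$. The position $x_k\ge x-\epsilon\sum_{j\le k}|Z_j|$ stays large with probability tending to one, because the sum of $n$ half-normals is tight. We therefore apply the one-step bound at a random point whose distance to $+\infty$ diverges in probability. A union bound over $n$ steps, combined with the fact that the one-step failure probability $\eta(y)\to 0$ as $y\to\infty$ (taking $\eta$ monotone, or $\sup_{y\ge M}\eta(y)\to0$), then gives a coupling failure probability of at most $\sum_k \mathbb E[\eta(x_k)\wedge 1]\to 0$. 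The coupling inequality yields the first claim. The main technical point is to ensure uniformity of $\eta$ on $[M,\infty)$. This follows from monotonicity of $U'$, since $\sup_{y\ge M}e^{-\epsilon zU'(y)}=e^{-\epsilon zU'(M)}$, and $U(y-\epsilon z)<U(y)$ holds for all $y\ge \text{argmin}\,U+\epsilon z$.

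For the second claim, take $P_0=+1$. The guided walk proposes $x+\epsilon|Z|$, which is rejected with probability tending to one, and the momentum flips to $-1$. The lazy version either stays lazy (momentum still $+1$) or flips. We couple the first step so that, with probability tending to one, the first effective, non-lazy guided step is a pure flip that leaves the position at $x$. More precisely, $W$ stays at $x$ until its first non-lazy move, and that move is a flip with probability tending to one. A cleaner approach is to decompose the first lazy step: with probability $1/2$ it is lazy, and with probability $1/2$ it flips. Since laziness only delays the process, it is simpler to note that after the first step the state is $(x,+1)$ or $(x,-1)$, each with probability about $1/2$. The lazy $(x,+1)$ branch reproduces the same situation. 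This suggests that in this case the comparison should be between $X_{n-1}$ and the lazy guided walk after its flip. I would handle this by coupling the first flip of the guided walk with a single lazy hold of the random walk Metropolis: the guided walk's lazy holds and its non-lazy flip are matched so that the total number of holds plus moves agrees up to the one step consumed by the flip. Carrying this out cleanly is the step I expect to be most delicate. The fallback is to show directly that $\mathcal L(W_n)$ is asymptotically $\mathcal L(W'_{n-1})$ for the $P_0=-1$ chain, by conditioning on the first step, and then apply the first claim.
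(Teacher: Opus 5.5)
Your first claim is correct and follows the paper's argument. You use the same synchronous coupling (shared $|Z|$, laziness coin given by the sign of $Z$), the same two failure events controlled as in Lemma~\ref{lem:convergence_alpha_light_tails}, and the coupling inequality. Your uniform envelope $\sup_{y\ge M}\eta(y)\to0$, combined with tightness of $\sum_{j\le n}|Z_j|$, supplies the $n$-step iteration that the paper omits as `tedious but straightforward'.

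The gap is in the second claim, and it cannot be closed as stated. You correctly observed that a lazy step at $(x,+1)$ keeps the momentum at $+1$. Both of your routes, however, need this delay to be exactly one step, and it is not. As $x\to\infty$, the $1/2$-lazy guided walk from $(x,+1)$ holds at $x$ until its first non-lazy step, and that step is a rejection-flip with probability tending to one. The flip time $G$ is therefore asymptotically $\mathrm{Geom}(1/2)$ on $\{1,2,\dots\}$, and your first claim gives
\[
\Bigl\|\mathcal L(W_n)-\sum_{g=1}^{n}2^{-g}\mathcal L(X_{n-g})-2^{-n}\delta_x\Bigr\|_{TV}\to0 .
\]
Since $\mathbb P(X_k=x)\to2^{-k}$, this yields $\mathbb P(W_n=x)\to(n+1)2^{-n}$, whereas $\mathbb P(X_{n-1}=x)\to2^{1-n}$. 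Hence $\liminf_{x\to\infty}\|\mathcal L(X_{n-1})-\mathcal L(W_n)\|_{TV}\ge(n-1)2^{-n}>0$ for every $n\ge2$; for $n=2$ the atoms at $x$ are $3/4$ against $1/2$.

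Your fallback, $\mathcal L(W_n)\approx\mathcal L(W'_{n-1})$ by conditioning on the first step, is therefore false. No matching of `holds plus moves' can rescue it either: the guided walk's pre-flip steps are holds with probability tending to one, while the random walk Metropolis steps they would be matched with move with probability about $1/2$.

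The paper's proof has the same flaw. It asserts $(W_1,P_1)=(x,-1)$ with probability tending to one. That is true for a non-lazy guided walk step, but for the $1/2$-lazy chain it holds only with probability about $1/2$. What is true, and follows at once from your first-claim coupling, is one of two things. Either take the first transition to be a non-lazy guided walk step (equivalently, condition on the first laziness coin being `move'), after which the position is $x$ with momentum $-\operatorname{sgn}(x)$ with probability tending to one. Or use the geometric-mixture statement displayed above.
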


\begin{proof}[Proof of Proposition \ref{prop:nstep_coupling}]
We describe the case $x \to \infty$, and simply note that the $x \to -\infty$ case follows an identical argument.  We couple the random walk and the lazy guided walk proposals synchronously in the following way. Given the starting position $X_0 = x$ for the random walk and $(W_0, P_0) = (x,-1)$ for the lazy guided walk algorithms, draw $Z \sim N(0,1)$ and set the random walk proposal to be $x' = x - \epsilon Z$. The lazy guided walk proposal can be set as $(w',p') = (x - \epsilon \max(Z,0), -1)$, since $\mathbb{P}(\max(Z,0) = 0) = 1/2$, in which case the chain will be lazy and stay put. If $Z \geq 0$ then clearly $x' = w'$, so upon acceptance $X_1 = W_1 = x'$ and $P_1 = -1$.  If $Z < 0$ then there is a chance that the chains can separate if the random walk proposal is accepted, since the lazy guided walk chain will not move.  In Lemma \ref{lem:convergence_alpha_light_tails} below, however, we show that under Assumption \ref{ass:strict_convexity} the probability of accepting such a random walk proposal tends to $0$ as $x \to \infty$.  Similarly it is easy to show that when $Z > 0$ under the same assumption $\alpha(x,x') = 1$ whenever $Z \in [0, x/\epsilon]$ for sufficiently large $x$, and that $\mathbb{P}(Z < x/\epsilon | Z > 0) \to 1$ as $x \to \infty$.  It therefore holds that $\mathbb{P}(X_1 \neq W_1) \to 0$ as $x \to \infty$.  Iterating the argument to show that $\mathbb{P}(X_n \neq W_n) \to 0$ for any fixed and finite $n$ is tedious but straightforward, since a large enough $x$ can be chosen that $\mathbb{P}\left( x- \epsilon\sum_{i=1}^n \max(0, Z_i) > c\right)$ is arbitrarily close to 1 for any fixed $c < \infty$, where each $Z_i \sim N(0,1)$ represents the Gaussian draw used for the $i$th proposal of each algorithm. We omit the details for brevity.  Applying the coupling inequality then gives $\|\mathcal{L}(X_n) - \mathcal{L}(W_n)\|_{TV} \leq \mathbb{P}(X_n \neq W_n) \to 0$ as $x \to \infty$, which completes the proof in the case $P_0 = -\operatorname{sgn}(x)$.

In the case $P_0 = 1$ then by the same Lemma \ref{lem:convergence_alpha_light_tails} the initial guided walk proposal will be rejected with probability tending to 1 as $x \to \infty$. With probability tending to 1, therefore, $(W_1,P_1) = (x, -1)$. The same coupling as above can then be constructed to connect $X_{n-1}$ and $W_n$ in law as required.  This completes the proof.
\end{proof}

\section{Discussion}

We have shown that non-reversible Metropolis--Hastings algorithms can sometimes produce much faster mixing chains than reversible counterparts, but that in other cases they behave similarly.  In particular, a random walk Metropolis will not always exhibit `random walk' behaviour; this depends crucially on the form of $\pi$, specifically the tails, which determine how much the Metropolis--Hastings chain $P$ is altered compared to the random walk proposal $Q$.  Non-reversible chains appear to be most advantageous when $\pi$ has heavy tails or more generally becomes flat in some region or direction.

Extensions of the guided walk Metropolis to $\mathbb{R}^d$ have been constructed but tend to be inferior to other (reversible) approaches \citep{gustafson2004value}.  Continuous-time processes that are similar in spirit have, however, shown more success in high dimensions \citep{bierkens2019zig}.  Another super-diffusive process is a random walk with heavy-tailed increments (see e.g. \citep{metzler2000random,klafter1987stochastic}).  Heavy-tailed increments can often be induced in sampling algorithms via randomised step-sizes, which are commonly employed and known to have benefits \cite{grazzi2026randomized}.  Using a heavy-tailed proposal in the random walk Metropolis can also increase the polynomial rate of convergence when $\pi$ has polynomial tails \cite{jarner2007convergence}.  We suspect that a similar improvement in the rate of a modified guided walk algorithm in which $Z$ is heavy-tailed can be shown, but leave the details for future work.  

\subsection*{Acknowledgments}

The authors would like to thank the Isaac Newton Institute for Mathematical Sciences, Cambridge, for support and hospitality during the programme \textit{Stochastic systems for anomalous diffusion}, where work on this paper was undertaken. This work was supported by EPSRC grant EP/Z000580/1.

\clearpage
\pagenumbering{arabic}
\renewcommand*{\thepage}{References | Page \arabic{page}}

\bibliographystyle{abbrvnat}
\bibliography{refs}

\clearpage
\pagenumbering{arabic}
\renewcommand*{\thepage}{A.\arabic{page}}
\renewcommand{\thesection}{\Alph{section}}
\appendix

\section{Proofs of Section \ref{sec:MH_high_accept}}
\label{app:proofs_MH_high_accept}

\begin{proof}[Proof of Proposition \ref{prop: MH is a T-chain}]
(i) \textit{$T$-chain:} First note that choosing $\nu = \delta_1$ gives 
\begin{equation*}
    P(x,A) \ge\int_A \alpha(x,y)q(x,y)dy =: T(x,A)
\end{equation*}
for any $A \in \mathcal{B}(\mathbb{R}^d)$. Define the function
\begin{align}
    T_y(\cdot) &:= \alpha(\cdot,y)q(\cdot,y) = \min\left\{q(\cdot,y),\frac{\pi(y)q(y,\cdot)}{\pi(\cdot)}\right\},
\end{align}
meaning $y$ is treated as a constant and $T_y(x)$ is viewed as a function of $x$ only.  By assumption $\pi(x)$ is upper semicontinuous and positive, meaning $1/\pi(x)$ is lower semicontinuous, as is the function $q(y,\cdot)$. Therefore for any sequence $x_n\to x$
$$
\liminf_{n\to\infty} \frac{q(y,x_n)}{\pi(x_n)}\ge \left( \liminf_{n\to\infty} q(y,x_n)\right) \cdot \left( \liminf_{n\to\infty} \frac{1}{\pi(x_n)}\right)\ge \frac{q(y,x)}{\pi(x)},
$$
meaning $\pi(y) q(y,\cdot)/\pi(\cdot)$ is lower semicontinuous.  The minimum of two lower semicontinuous functions is also lower semicontinuous, since for each $c \in \mathbb{R}$
$$
\{x\in \mathcal{X}:\min\{g(x),h(x)\}\le c\} =\{x\in \mathcal{X}:g(x)\le c\} \cup \{x\in \mathcal{X}:h(x)\le c\},
$$
which is closed if $g$ and $h$ are both lower semicontinuous. Combining gives that $T_y(x)$ is lower semicontinuous in $x$.  Upon noting that $T(x,A) = \int_A T_y(x) dy$, then for any sequence $x_n\to x$
\begin{align}
    \liminf_{n\to\infty} T(x_n,A) = \liminf_{n\to\infty}  \int T_y(x_n)\mathbb{I}_A(y)dy
    \ge   \int \liminf_{n\to\infty} T_y(x_n)\mathbb{I}_A(y)dy
    \ge \int_A T_y(x)dy  = T(x,A)
\end{align}
using Fatou's lemma and lower semicontinuity of $T_y(x)$, which shows that $T(\cdot, A)$ is lower semicontinuous for any $A \in \mathcal{B}(\mathbb{R}^d)$. Clearly $T(x,\mathbb{R}^d) \geq \int_{\|y-x\|\leq \delta_q}q(x,y)\alpha(x,y)dy > 0$ for any $x$ by Assumption \ref{ass:pi_q_regularities} and therefore $\{X_n\}_{n\geq 0}$ is a $T$-chain.

\textit{$\pi$-irreducibility:} 
By assumption if $\|x-y\|\le \delta_q$, then $q(x,y)>0$. This together with the assumptions on $\pi(x)$ implies
\begin{equation} \label{eq:q_alpha_lb}
q(x,y)\alpha(x,y) = \min\left(q(x,y),\frac{\pi(y)q(y,x)}{\pi(x)}\right) > 0.
\end{equation}
Consider any set $A \in \mathcal{B}(\mathbb{R}^d)$ such that $\pi(A) > 0$. Since $\pi$ has a density that is positive on all of $\mathbb{R}^d$, $A$ must have positive Lebesgue measure. The inner regularity property of the latter implies that $A$ contains a compact subset $K \subseteq A$ also of positive Lebesgue measure. For a fixed $x$ and any $z \in K$, there exists a path of finite length $n$ and points $x = x_0, x_1, \dots, x_n = z$ such that $\|x_i - x_{i-1}\| < \delta_q$. Since $K$ is bounded, we can choose an $n$ large enough to reach any $z \in K$ from $x$. The sub-transition density $p^n(x,z)$ associated with generating $z$ from $x$ in $n$ steps satisfies
\[
p^n(x, z) \geq \int \dots \int \prod_{i=1}^n q(x_{i-1}, x_i)\alpha(x_{i-1},x_i) dx_1 \dots dx_{n-1} > 0
\]
for all $z \in K$. It then follows that the transition probability to the set $A$ satisfies
\[
P^n(x, A) \ge P^n(x,K) \geq \int_K p^n(x,z)dz > 0
\]
from which it follows that the chain is $\pi$-irreducible.

\textit{Aperiodicity:}
To show aperiodicity, it suffices to prove that for all $x\in \mathbb{R}^d$, there exists $A \in \mathcal{B}(\mathbb{R}^d)$ such that both $P(x,A)>0$ and $P^2(x,A)>0$.  Let $B_x =\{y:\|y-x\|<\delta_q\}$, and note that $P(x,B_x) >0$ by Assumption \ref{ass:pi_q_regularities}.  For $P^2(x,B_x)$, we have
\begin{equation}
    P^2(x,B_x) = \int P(x,dy)P(y,B_x) \ge \int_{B_x} q(x,y)\alpha(x,y)P(y,B_x)dy.
\end{equation}
For $y\in B_x$, define $B_y  = \{z:\|z-y\|<\delta_q\}$. Since $\|y-x\| < \delta_q$, the intersection $B_x\cap B_y$ is nonempty. Therefore,
$$
P(y,B_x)\ge P(y, B_x\cap B_y) \ge \int_{B_x\cap B_y} q(y,z)\alpha(y,z)dz>0.
$$
Since the integrand $p(x,y)P(y,B_x)$ is the product of two strictly positive terms for all $y \in B_x$, we conclude that $P^2(x,B_x) > 0$, showing that the chain is aperiodic.

\textit{Smallness of compacts:}
By Theorem 5.5.7 and Theorem 6.2.5 in \cite{meyn2012markov}, if a chain is $\pi$-irreducible and aperiodic, then every petite set is small, and if the chain is also a $T$-chain, then every compact set is petite. Therefore, for a  $\pi$-irreducible and aperiodic $T$-chain, every compact set is small.

(ii) By part (i) and Assumption \ref{ass:pi_q_regularities} the conditions of Proposition 3.1 in \cite{roberts1996geometric} are satisfied, which implies the result.
\end{proof}

\begin{proof}[Proof of Proposition \ref{prop: counterexample}]
First note that
\begin{align*}
   \lim_{|x|\to\infty} \int\alpha(x,y) Q(x,dy)
    \ge \lim_{|x|\to\infty}\int \min\left\{Q(x,dy),\frac{1}{2\sqrt{2\pi}}\exp\left(\frac{x^2}{2}+\frac{xy}{2}-\frac{9y^2}{8}\right)dy\right\}= 1,
\end{align*}
To verify that $Q$ is not geometrically ergodic, suppose for contradiction that it is. By Proposition \ref{prop: MH is a T-chain}, there exists $V\ge1$ such that 
\[
\limsup_{|x|\to\infty}\frac{QV(x)}{V(x)} = \limsup_{|x|\to\infty} \left[\frac{(1-\varepsilon(x))Q_1V(x)}{V(x)} + \varepsilon(x)\frac{Q_2V(x)}{V(x)}\right]<1
\]
Since $\varepsilon(x)Q_2V(x)/V(x)\ge 0$, this inequality implies $\limsup_{|x|\to\infty}Q_1V(x)/V(x)<1$.  This implies further that $Q_1V(x) \leq \rho V(x)$ for sufficiently large $|x|$ and some $\rho < 1$.  In Lemma \ref{lem:ar1_vgrowth} we show that any such $V(x) \geq c|x|^\eta$ for all sufficiently large $|x|$. 
Note also that
\begin{align*}
    \limsup_{|x|\to\infty}\frac{QV(x)}{V(x)} 
    &\ge \limsup_{|x|\to\infty} \left[\varepsilon(x)\frac{Q_2V(x)}{V(x)}\right].
\end{align*}
Using this and the condition $QV(x) \leq \rho V(x)$ for large enough $|x|$ implies
\[
\varepsilon(x)Q_2V(x) \leq QV(x) \leq \rho V(x),
\]
meaning we have the upper bound $Q_2V(x) \leq \rho(2+|x|)V(x)$.  
Applying the lower bound at any $y \in (\exp(x^2), \exp(x^2)+1)$ gives $V(y) \geq c\exp(\eta x^2)$. Integrating over the unit interval gives
\[
Q_2V(x) = \int_{\exp(x^2)}^{\exp(x^2)+1} V(y)\,dy \geq c\exp(\eta x^2).
\]
Combining with the upper bound leads to the inequality
\[
\rho(2+|x|)V(x) \geq c\exp(\eta x^2) \implies V(x) \geq \frac{c}{\rho(2+|x|)}\exp(\eta x^2).
\]
Now, applying this new lower bound at the point $y = \exp(x^2)$ gives
\[
V(\exp(x^2)) \geq \frac{c}{\rho(2+\exp(x^2))}\exp\bigl(\eta e^{2x^2}\bigr) \geq \exp\bigl(\eta' e^{2x^2}\bigr)
\]
for any $\eta' < \eta$ and large enough $|x|$.  This provides a lower bound for $Q_2V(x)$, leading to
\[
\exp(\eta' e^{2x^2}) \leq Q_2V(x) \leq \rho(2+|x|)V(x) \implies V(x) \geq \exp(\eta'' e^{2x^2})
\]
for $\eta''< \eta'$ in the same $|x|$ limit.  For any $V$ that satisfies the above, however, letting $Z \sim N(0,1)$ it holds that
\[
Q_1V(x) 
= 
\mathbb{E}[V(x/2+Z)] 
\geq 
\mathbb{E}\left[\exp\left(\eta''  e^{2(x/2+Z)^2}\right)\right] 
= 
\frac{1}{\sqrt{2\pi}}\int \exp\left(\eta'' e^{2(x/2+z)^2} - \frac{z^2}{2}\right)dz = \infty.
\]
This directly contradicts the requirement $Q_1V(x) \leq \rho V(x) < \infty$, from which we conclude that $Q$ cannot be geometrically ergodic.

To see that $P(x,dy)$ is geometrically ergodic, choosing $V(x) = \exp(x^2/4) $, note that
\begin{align*}
\limsup_{|x|\to\infty} \frac{PV(x)}{V(x)} & =\limsup_{|x|\to\infty}\left[\frac{\int V(y)\alpha(x,y)Q(x,dy) }{V(x)}+ \int(1-\alpha(x,y)) Q(x,dy) \right]\\
 &= \limsup_{|x|\to\infty}\frac{\int\alpha(x,y)(1-\varepsilon(x))V(y)Q_1(x,dy)  + \int_{\exp(x^2)}^{\exp(x^2)+1}\alpha(x,y)\varepsilon(x) V(y)}{V(x)}.
\end{align*}
Upon noting that $1-\varepsilon(x) > 1/2$ for large $|x|$ and $q_2(y,x) = 0$, then provided that $|x|$ is large enough
\begin{equation*}
    \alpha(x,y) \le \exp(x^2-y^2)\frac{q_1(y,x)}{0.5q_1(x,y)} = 2\exp(x^2-y^2) \frac{\exp\left(-(x-y/2)^2/2\right) }{\exp\left(-(y-x/2)^2/2\right)} = 2\exp\left(\frac{5}{8}x^2-\frac{5}{8}y^2\right)
\end{equation*}
for $x\notin (\exp(y^2),\exp(y^2)+1)$.  Noting that $V(x) = \exp(x^2/4)$ then this implies
\begin{equation*}
    \limsup_{|x|\to\infty} \frac{PV(x)}{V(x)}
    \le 
    \limsup_{|x|\to \infty}\left[(1-\varepsilon(x))\frac{Q_1V(x)}{V(x)} + 2\exp\left(\frac{5}{8}x^2-\frac{5}{8}\exp(2x^2)\right)\varepsilon(x)\frac{V(\exp(x^2)+1)}{V(x)}\right],
\end{equation*}
which upon simplification can be written
\begin{equation}
\label{eq:pv_expression}
    \limsup_{|x|\to\infty}\left[\left(1-\frac{1}{2+|x|}\right)\frac{Q_1V(x)}{V(x)} + \frac{2}{2+|x|} \exp\left(\frac{3}{8}x^2 - \frac{3}{8}\exp(2x^2)+\frac{1}{2}\exp(x^2)+\frac{1}{4}\right)\right].
\end{equation}
The second term on the right-hand side asymptotes to $0$ owing to the dominant $- 3\exp(2x^2)/8$ term inside the exponential.  Turning to the first we have
\begin{align*}
    \limsup_{|x|\to\infty}\frac{Q_1V(x)}{V(x)} 
    &= 
    \limsup_{|x|\to\infty}\frac{\mathbb{E}[\exp((x/2+Z)^2/4)]}{\exp(x^2/4)}
    = \limsup_{|x|\to\infty}\exp(-3x^2/16)\mathbb{E}\left[\exp\left(\frac{xZ}{4}+\frac{Z^2}{4}\right)\right].
\end{align*}
Expanding the expectation gives $\mathbb{E}\left[\exp\left(xZ/4+Z^2/4\right)\right] = c'\exp\left(x^2/16\right)$ for some $c'<\infty$, meaning
\begin{equation*}
    \lim_{|x|\to\infty}\frac{Q_1V(x)}{V(x)}  = \lim_{|x|\to\infty}c'\exp\left(-\frac{x^2}{8}\right)=0.
\end{equation*}
Substituting into \eqref{eq:pv_expression} therefore gives
\begin{equation*}
    \limsup_{|x|\to\infty} \frac{PV(x)}{V(x)}
    \le 
    \limsup_{|x|\to\infty}\frac{Q_1V(x)}{V(x)} = 0,
\end{equation*}
from which the result follows.
\end{proof}

\begin{lemma}
\label{lem:ar1_vgrowth}
If $Q_1(x,\cdot) = N(x/2, 1)$, and $V : \mathbb{R} \to [1, \infty)$ satisfies
\begin{equation}\label{eq:drift}
    Q_1 V(x) \leq \rho V(x), \qquad |x| > x_0
\end{equation}
for some $\rho \in (0,1)$ and $x_0 < \infty$, then there exist constants $c > 0$ and $\eta > 0$ such that $V(x) \geq c\, |x|^{\eta}$ for all sufficiently large $|x|$.
\end{lemma}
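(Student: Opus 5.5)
The plan is a chaining argument: lower bound $V$ on a compact set, then propagate the bound outward along the scale $x, 2x, 4x, \dots$. Let $Z\sim N(0,1)$. For $|x|>x_0$ the drift inequality \eqref{eq:drift} reads
\[
\rho V(x) \geq \mathbb{E}[V(x/2+Z)] \geq \mathbb{P}(x/2+Z\in A)\,\inf_{A}V
\]
for any set $A$. Running this backwards, $V$ at a far point $x$ is at least $\rho^{-1}\,\mathbb{P}(x/2+Z\in A)$ times $V$ on a region $A$ near $x/2$. The idea is to take $A$ a unit window around $x/2$, so that the probability factor is a fixed constant $p_0 := \mathbb{P}(|Z|\le 1)>0$. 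This yields a recursion
\[
\inf_{|y-x|\le 2}V(y) \;\geq\; \frac{p_0}{\rho}\,\inf_{|y-x/2|\le 2}V(y).
\]
Here the windows $|y-x|\le 2$ map into windows around $y/2$ of radius $1$, which sit inside radius $2$ around $x/2$, with the unit Gaussian window adding radius $1$.

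The main obstacle is that $p_0/\rho$ need not exceed $1$, since $\rho$ may be close to $1$ and $p_0\approx 0.68$. I will fix this by using the full structure of the expectation rather than a single window. For $|y|>x_0$, write
\[
\rho V(y)\geq \mathbb{E}[V(y/2+Z)] \ge \Big(\inf_{|u-y/2|\le R}V(u)\Big)\,\mathbb{P}(|Z|\le R),
\]
with $R$ chosen so large that $\mathbb{P}(|Z|\le R)>\rho$, making $\lambda := \mathbb{P}(|Z|\le R)/\rho>1$. With windows of radius $2R$, a point $y$ with $|y-x|\le 2R$ has $|u-x/2|\le R+R=2R$ whenever $|u-y/2|\le R$. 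Setting $m(x):=\inf_{|y-x|\le 2R}V(y)$, this gives $m(x)\ge \lambda\, m(x/2)$ whenever $|x|-2R>x_0$.

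To finish, iterate. For large $|x|$, let $k$ be the number of halvings needed to bring $|x|$ into a fixed compact band $B:=\{x_1\le |t|\le 2x_1\}$, with $x_1$ large enough that the recursion applies on all intermediate scales. Then $k\ge \log_2|x|-C$. On the band, the relevant windows lie in a compact set on which $V\ge 1$, so $m\ge 1$ there. Hence
\[
V(x)\ge m(x)\ge \lambda^{k}\ge \lambda^{-C}|x|^{\log_2\lambda},
\]
which gives the claim with $\eta=\log_2\lambda>0$ and $c=\lambda^{-C}$.

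Two routine points need care. The window containment must be checked, and every intermediate point must satisfy $|\cdot|>x_0+2R$; choosing $x_1$ large handles this. No measurability issues arise beyond $V$ being measurable, since infima over windows are used only as lower bounds.
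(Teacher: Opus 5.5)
Your proof is correct, and it follows a genuinely different route from the paper's.

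The paper argues through hitting times. With $C=[-x_0,x_0]$ and $\varepsilon=-\log\rho$, the drift condition gives $\log V(x)\geq \varepsilon\,\mathbb{E}_x[\tau_C]$, via a supermartingale and Jensen argument taken from Mengersen and Tweedie. It then uses the explicit law $X_k\sim N(x/2^k,\sigma_k^2)$, a Gaussian tail bound and a union bound over the first $\lfloor(\log_2 x)/2\rfloor$ steps to show that $\mathbb{E}_x[\tau_C]$ is at least of order $\log x$.

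You instead iterate the drift inequality deterministically on window infima. Taking $R$ with $\mathbb{P}(|Z|\le R)>\rho$ produces a gain $\lambda>1$ per halving. The containment $|u-x/2|\le 2R$ closes the recursion $m(x)\ge\lambda\,m(x/2)$ for $|x|>x_0+2R$. The band $x_1\le|t|\le 2x_1$ with $x_1>x_0+2R$, together with the trivial bound $m\ge 1$, finishes the argument; compactness of the band is not actually needed, since $V\ge 1$ everywhere.

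Your version is fully elementary: it needs no stopping times, optional stopping, Jensen or external citation. It also yields the explicit exponent $\eta=\log_2\bigl(\mathbb{P}(|Z|\le R)/\rho\bigr)$, which tends to $-\log_2\rho$ as $R\to\infty$. This is essentially optimal: $V(x)=(1+|x|)^{\eta}$ with $2^{-\eta}<\rho$ satisfies the drift condition for large $|x|$. It is up to eight times larger than the paper's $\eta=-\log_2(\rho)/8$.

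The paper's route is more portable. The inequality $\log V\ge\varepsilon\,\mathbb{E}_x[\tau_C]$ holds for any kernel, so the question reduces to lower bounding a hitting time. Your recursion instead uses the specific structure of $Q_1$, namely a deterministic contraction $x\mapsto x/2$ plus translation-invariant noise. It does extend immediately to $N(ax,\sigma^2)$ with $0<|a|<1$, using windows of radius $R/(1-|a|)$.
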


\begin{proof}
Let $C = [-x_0, x_0]$ and $\tau_C := \inf\{n \geq 1 : X_n \in C\}$. Set
$\varepsilon := -\log \rho > 0$.  Following the proof of Theorem~3.3 of \cite{mengersen1996rates}, it must be that any valid $V$ satisfies
\begin{equation}\label{eq:logV}
    \log V(x) \geq \varepsilon\mathbb{E}_x[\tau_C]
\end{equation}
for all sufficiently large $|x|$.\footnote{Note an inconsequential typographical error in equation (31) of \cite{mengersen1996rates}, in which the bound is reported as $\mathbb{E}_x[\tau_C]/\varepsilon$.} It therefore remains to obtain a lower bound on
$\mathbb{E}_x[\tau_C]$.  Set $n(x) := \lfloor (\log_2 x)/2 \rfloor$ for some $x > x_0$. Since
$\mathbb{P}_x(\tau_C > k)$ is non-increasing in $k$,
\[
    \mathbb{E}_x[\tau_C]
    = \sum_{k=0}^\infty \mathbb{P}_x(\tau_C > k)
    \geq n(x) \cdot \mathbb{P}_x\left(\tau_C > n(x)\right).
\]
Note that if $X_0 = x$ then $X_k \sim N(x/2^k,
\sigma_k^2)$ with $\sigma_k^2 < 4/3$. For each $k \leq n(x)$ the mean
of $X_k$ satisfies $x/2^k \geq x/2^{n(x)} \geq \sqrt{x}$, so using a Gaussian tail bound
\[
    \mathbb{P}_x(X_k \in C) \leq \Phi \left(\frac{x_0 - \sqrt{x}}{2/\sqrt{3}}\right)
    \leq e^{-c_0 x}
\]
for some constant $c_0 > 0$ and all sufficiently large $x$. A union bound over
the $n(x) = O(\log x)$ steps gives $\mathbb{P}_x(\tau_C \leq n(x)) \leq n(x)e^{-c_0 x}
\to 0$, so that $\mathbb{P}_x(\tau_C > n(x)) \geq 1/2$ for all sufficiently large $x$.
Therefore
\[
    \mathbb{E}_x[\tau_C] \geq \frac{n(x)}{2} \geq \frac{\log_2 x}{8} = \frac{\log(x)}{8\log(2)}.
\]
Substituting into \eqref{eq:logV} gives
\[
    \log V(x) \geq \log(x^\eta),
\]
with $\eta := \varepsilon/(8\log (2)) > 0$ and therefore $V(x) \geq x^{\eta}$.  An analogous argument holds when $x < -x_0$, which completes the proof.
\end{proof}

\section{Proofs of Section \ref{sec: RWM and GWM on 1D R}}
\label{app:proofs_1D_RWM_GWM}

\subsection{Proofs of Subsection \ref{subsec:heavy_tailed_target}}
\label{app:proofs_gwm_polynomialtails}


\begin{lemma}
\label{lem:lowerbound_polynomialtails}
    If $\pi$ satisfies Assumption \ref{ass:heavy-tailed}, then the polynomial convergence rate of the guided walk Metropolis is at least $r$.
\end{lemma}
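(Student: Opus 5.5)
We will establish the lower bound through a polynomial drift condition of Jarner--Roberts type for the guided walk kernel $P$ on $\mathbb{R}\times\{-1,+1\}$. The target is a function $V \geq 1$ with $V(x,p) \asymp |x|^{s}$ for large $|x|$, satisfying
\[
PV \leq V - cV^{\theta} + b\mathbb{I}_C,\qquad \theta = 1-\tfrac1s,
\]
for some compact (hence small) set $C = [-K',K']\times\{-1,+1\}$. By \cite{jarner2007convergence}, such a condition yields a total variation rate $n^{-\theta/(1-\theta)}$ for every admissible $s$, provided $\int V^{\theta}\,d\mu<\infty$ or a similar integrability requirement holds. Since $\pi(x)\propto |x|^{-1-r}$, we need $s\theta = s-1 < r$, i.e. $s<r+1$, which gives $\theta/(1-\theta) = s-1 \uparrow r$. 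Smallness of compacts and aperiodicity for the lifted chain would be checked as in Proposition \ref{prop: MH is a T-chain}(i), with one extra step to flip the momentum. Rates arbitrarily close to $r$ then follow.

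The key design choice is making $V$ depend on the momentum, since the guided walk is ballistic. I would follow \cite{vasdekis2022note} and take, for $|x|>K$,
\[
V(x,p) = |x|^{s}\bigl(1 - a\,\operatorname{sgn}(x)p/|x|\bigr)\quad\text{or equivalently}\quad V(x,p)=|x|^{s} - a\,s\,p\operatorname{sgn}(x)\,|x|^{s-1},
\]
with $a>0$ a constant to tune. The mechanism works in two cases. When the chain moves outward ($p\operatorname{sgn}(x)=+1$), the acceptance ratio $\pi(y)/\pi(x) = (|x|/|y|)^{1+r} \approx 1-(1+r)\epsilon|Z|/|x|$ gives a rejection probability of order $1/|x|$. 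A rejection flips $p$ to inward, and the correction term then lowers $V$ by about $2as|x|^{s-1}$. When the chain moves inward, every proposal is accepted for large $|x|$, and $|x|$ decreases by $\epsilon\mathbb{E}|Z|$ per step, so $V$ drops by about $s\epsilon\mathbb{E}|Z|\,|x|^{s-1}$.

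In the outward case I would Taylor expand to get
\[
PV - V \approx s|x|^{s-1}\epsilon\mathbb{E}|Z| - \tfrac{(1+r)\epsilon\mathbb{E}|Z|}{|x|}\cdot 2as|x|^{s-1}\cdot(\text{order }|x|^{0}\text{ factor}).
\]
Here I expect the bookkeeping to show that the leading terms are both of order $|x|^{s-1}$, so $a$ must be chosen so that the net drift is $-c|x|^{s-1} = -cV^{\theta}$ in both momentum states. Because the outward rejection rate is $O(1/|x|)$, the correction term probably has to scale as $|x|^{s}$ times a constant rather than as $|x|^{s-1}$. In that case I would instead use $V(x,p)=|x|^{s}\,e^{-\beta p\operatorname{sgn}(x)}$ or $(1+\beta)^{\pm}$ weighting, so that each flip changes $V$ by a constant fraction of $|x|^{s}$. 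I would try this multiplicative form first.

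Checking it: outward, the drift is $s\epsilon\mathbb{E}|Z||x|^{s-1}w_+ - \frac{(1+r)\epsilon\mathbb{E}|Z|}{|x|}|x|^s(w_+-w_-)$. Inward, the drift is $-s\epsilon\mathbb{E}|Z||x|^{s-1}w_-$. With $w_+ > w_-$, both are negative of order $|x|^{s-1}$ exactly when $s w_+ < (1+r)(w_+-w_-)$, i.e. $s<(1+r)(1-w_-/w_+)$. Taking $w_-/w_+\to 0$ lets $s$ approach $1+r$.

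The main obstacle is controlling the error terms uniformly. This covers Gaussian proposals that jump to the region $|y|\le K$ or cross zero, second-order terms in the expansion, and the heavy right tail of $|Z|$ in $\mathbb{E}|x+\epsilon Z|^s$. These are handled by splitting on $|Z|\le |x|^{1/2}$ and using Gaussian tails, which make the complement contribute $o(|x|^{s-1})$. Finally, I would note that polynomial ergodicity of the augmented chain implies the same rate for the $x$-marginal.
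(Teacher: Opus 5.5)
Your final multiplicative choice $V(x,p)=w_{\pm}|x|^{s}$ with $w_+>w_-$ is exactly the paper's $V(x,p)=e^{\delta\operatorname{sgn}(x)p}\pi(x)^{-\beta}$, with $s=(r+1)\beta$ and $w_-/w_+=e^{-2\delta}$. Your outward-case condition $s<(1+r)(1-w_-/w_+)$ is precisely the paper's requirement $e^{-2\delta}<1-\beta$, so the argument is correct and essentially the same as the paper's, including the Gaussian-tail control of the inward case and letting $s\uparrow r+1$. One small slip: as written, $e^{-\beta p\operatorname{sgn}(x)}$ (and likewise your additive ansatz with $a>0$) puts the larger weight on inward-moving states, whereas your own drift computation correctly needs it on outward-moving ones, i.e.\ $e^{+\delta p\operatorname{sgn}(x)}$.
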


\begin{proof}
We prove the result by constructing a drift condition of the form
\[
PV(x) \leq V(x) - \gamma V(x)^{\alpha} + b\mathbb{I}_C(x)
\]
for some Lyapunov function $V:\mathbb{R} \to [1,\infty)$, $0 < b,\gamma < \infty$, a small set $C$ and $0 < \alpha < 1$.  Polynomial ergodicity with rate at least $\alpha/(1-\alpha)$ in total variation distance then follows from Theorem 3.6 of \citep{jarner2002polynomial}.

Consider the Lyapunov function 
\begin{equation*}
    V(x,p) = \exp(\delta \operatorname{sgn}(x)p) \pi(x)^{-\beta}
\end{equation*}
where $\delta>0$ and $1/
(r+1)<\beta<1$. Let $C = [-x_0,x_0]$ be a small set, where $x_0$ is chosen sufficiently large such that for all $|x|>x_0$, the acceptance region is $A(x) = [-|x|,|x|]$ and the rejection region is $R(x) = (-\infty,-|x|)\cup (|x|,\infty)$. Since $Z$ is Gaussian, writing $q(z)$ to denote its density we have
\begin{equation*}
    \mu = \int_0^\infty z q(z) dz < \infty, \quad \sigma^2 = \int_0^\infty z^2 q(z) dz < \infty.
\end{equation*}
Note that the Lyapunov function $V(x,p)$ shows two different patterns in the exponential term, and we will discuss these two cases separately.

\textit{Case 1:} $\operatorname{sgn}(x)p > 0$.  Here $V(x,p) = e^\delta \pi(x)^{-\beta} = e^\delta |x|^{(r+1)\beta}$. The transition operator gives
\begin{equation}
    PV(x,p) = 2\int_0^\infty V(x+pz,p)\frac{\pi(x+pz)}{\pi(x)}q(z)dz + 2\int_0^\infty V(x,-p)\left(1-\frac{\pi(x+pz)}{\pi(x)} \right)q(z)dz
\end{equation}
Since $\operatorname{sgn}(x)p>0$, we have $\operatorname{sgn}(x+pz)p>0$ for all $z>0$. For the rejection probability, by Taylor's theorem with explicit remainder there exists $\xi \in (0, z/|x|)$ such that
\begin{equation*}
    1 - \left(1+\frac{z}{|x|}\right)^{-(1+r)} = \frac{(1+r)z}{|x|} - \frac{(1+r)(2+r)}{2}\frac{z^2}{|x|^2}\left(1+\xi\right)^{-(3+r)}.
\end{equation*}
Since $(1+\xi)^{-(3+r)} \leq 1$ for $\xi > 0$, the remainder is bounded in absolute value by $\frac{(1+r)(2+r)}{2}\frac{z^2}{|x|^2}$. Integrating term by term against $q(z)$ is therefore justified by the finite second moment $\sigma^2 < \infty$, giving
\begin{equation*}
    \int_0^\infty \left(1-\frac{\pi(x+pz)}{\pi(x)}\right)q(z)dz = \frac{(1+r)\mu}{|x|} - R(x),
\end{equation*}
where the remainder satisfies
\begin{equation*}
    0 \leq R(x) \leq \frac{(1+r)(2+r)\sigma^2}{2|x|^2}.
\end{equation*}
Thus,
\begin{equation}
\begin{aligned}
    \frac{PV(x,p)-V(x,p)}{V(x,p)} 
    &= 2\int_0^\infty \left[\left(\frac{\pi(x+pz)}{\pi(x)}\right)^{1-\beta}-1\right]q(z)dz + 2e^{-2\delta}\int_0^\infty \left(1-\frac{\pi(x+pz)}{\pi(x)}\right) q(z) dz\\
    &\le 2\int_0^\infty\left[\left(1+\frac{z}{|x|}\right)^{(r+1)(\beta-1)}-1\right]q(z)dz + 2e^{-2\delta}\left[\frac{(1+r)\mu}{|x|} - R(x)\right].
\end{aligned}
\end{equation}
Applying Taylor's theorem to the first integrand as before gives
\begin{align}
    \frac{PV(x,p)-V(x,p)}{V(x,p)}  &\le -\frac{2(r+1)(1-\beta)\mu}{|x|} + \frac{c'\sigma^2}{|x|^2} + \frac{2e^{-2\delta}(1+r)\mu}{|x|}\\
    &= \frac{2(r+1)\mu(\beta - 1 + e^{-2\delta})}{|x|} + \frac{c'\sigma^2}{|x|^2}.
\end{align}
Now choose $\delta$ large enough that $e^{-2\delta} < 1-\beta$, which is possible since $\beta < 1$, so that the coefficient $\beta - 1 + e^{-2\delta} < 0$. Setting $c_1 = 2(r+1)\mu(1-\beta-e^{-2\delta}) > 0$, and choosing $|x|$ large enough that $c'\sigma^2/|x| < c_1/2$, gives
\begin{align}
    \frac{PV(x,p)-V(x,p)}{V(x,p)} \leq -\frac{c}{|x|}
\end{align}
for some $c > 0$, as required.

\textit{Case 2:} $\operatorname{sgn}(x)p<0$.  Now $V(x,p) =e^{-\delta} \pi(x)^{-\beta} = e^{-\delta} |x|^{(r+1)\beta}$. The transition operator gives
\begin{equation}
\begin{aligned}
    PV(x,p) 
    &= 2\int_0^{|x|} V(x+pz,p) q(z) dz + 2\int_{|x|}^{2|x|} V(x+pz,p) q(z)dz\\
    &\quad + 2\int_{2|x|}^\infty V(x+pz,p)\frac{\pi(x+pz)}{\pi(x)}q(z)dz \\
    &\quad + 2\int_{2|x|}^\infty V(x,-p) \left(1-\frac{\pi(x+pz)}{\pi(x)}\right)q(z)dz.
\end{aligned}
\end{equation}
For $z\in(0,|x|)$, we have $\operatorname{sgn}(x+pz)p <0$, so $V(x+pz,p) = e^{-\delta} |x+pz|^{(r+1)\beta}$. For $z \in (|x|,2|x|)$, we have $\operatorname{sgn}(x+pz)p>0$, so $V(x+pz,p) = e^\delta |x+pz|^{(r+1)\beta}$. Therefore
\begin{equation}
\begin{aligned}
    \frac{PV(x,p) - V(x,p)}{V(x,p)}
    &= 2\int_0^{|x|} \left[\left|\frac{x+pz}{x}\right|^{(r+1)\beta}-1\right]q(z) dz + 2e^{2\delta} \int_{|x|}^{2|x|} \left|\frac{x+pz}{x}\right|^{(r+1)\beta}q(z)dz\\
    &\quad + 2e^{2\delta}\int_{2|x|}^\infty \left|\frac{x+pz}{x}\right|^{(r+1)\beta}\frac{\pi(x+pz)}{\pi(x)}q(z)dz\\
    &\quad + 2e^{2\delta}\int_{2|x|}^\infty \left(1- \frac{\pi(x+pz)}{\pi(x)}\right)q(z) dz -2\int_{|x|}^\infty q(z)dz\\
    &\le 2\int_0^{|x|}\left[\left|\frac{x+pz}{x}\right|^{(r+1)\beta}-1\right]q(z) dz + 2(e^{2\delta}-1) \mathbb{P}(Z > |x|).
\end{aligned}
\end{equation}
Applying Taylor's theorem with explicit remainder to the first integrand, for $z \in (0,|x|)$ and $\operatorname{sgn}(x)p < 0$ there exists $\xi \in (0, z/|x|)$ such that
\begin{equation*}
    \left(1-\frac{z}{|x|}\right)^{(r+1)\beta} - 1 = -(r+1)\beta\frac{z}{|x|} + \frac{(r+1)\beta((r+1)\beta-1)}{2}\frac{z^2}{|x|^2}(1-\xi)^{(r+1)\beta - 2}.
\end{equation*}
Since $(1-\xi)^{(r+1)\beta-2} \leq 1$ for $\xi \in (0,1)$ and $(r+1)\beta > 1$ by assumption, the remainder is bounded in absolute value by $c'' z^2/|x|^2$ where $c'' = (r+1)\beta((r+1)\beta-1)/2 > 0$. Integrating term by term against $q(z)$ over $(0,|x|)$ is therefore justified by $\sigma^2 < \infty$, giving
\begin{align}
    \frac{PV(x,p) - V(x,p)}{V(x,p)} &\le -\frac{2(r+1)\beta}{|x|}\int_0^{|x|}zq(z)dz + \frac{c''\sigma^2}{|x|^2} + 2(e^{2\delta}-1)\mathbb{P}(Z>|x|).
\end{align}
Since $q(z)$ is Gaussian, $\mathbb{P}(Z > |x|)$ decays super-exponentially in $|x|$, so for any fixed $\delta$ we have $2(e^{2\delta}-1)\mathbb{P}(Z>|x|) = o(1/|x|)$ as $|x| \to \infty$. Moreover $\int_0^{|x|}zq(z)dz \to \mu$ as $|x|\to\infty$, so for sufficiently large $|x|$ the dominant term is $-2(r+1)\beta\mu/|x|$, and the same conclusion as in case 1 holds, namely
\begin{align}
    \frac{PV(x,p) - V(x,p)}{V(x,p)} \leq -\frac{c}{|x|}
\end{align}
for some $c > 0$.

Combining both cases shows that outside $C$ there exists $c^*>0$ such that the polynomial drift condition
\begin{equation}
    PV(x,p) - V(x,p) \le -c^*V(x,p)^{((r+1)\beta-1)/[(r+1)\beta]}
\end{equation}
holds. The polynomial rate is therefore at least
\begin{equation}
    \frac{((r+1)\beta-1)/[(r+1)\beta]}{1-((r+1)\beta-1)/[(r+1)\beta]} = (r+1)\beta -1.
\end{equation}
For any $\varepsilon > 0$, choosing $\beta = 1-\varepsilon \in (1/(r+1), 1)$ gives a valid Lyapunov function and a polynomial rate of at least $(r+1)(1-\varepsilon) - 1 = r - (r+1)\varepsilon$. Since this holds for all $\varepsilon > 0$, we conclude that the polynomial rate is at least $\sup_{\varepsilon > 0}\{r - (r+1)\varepsilon\} = r$, which completes the proof.
\end{proof}


\begin{lemma}
\label{lem:upperbound_polynomialtails}
    If $\pi$ satisfies Assumption \ref{ass:heavy-tailed}, then the polynomial convergence rate of the guided walk Metropolis with finite variance proposal is at most $r$.
\end{lemma}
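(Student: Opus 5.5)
We will show that the total variation distance $\|P^n(\xi,\cdot)-\mu\|_{TV}$ cannot decay faster than $n^{-(r+\varepsilon)}$ for any $\varepsilon>0$. The standard route, following \citep{jarner2007convergence} and \citep{vasdekis2022note}, compares the tail mass of $\mu$ with the maximal distance the chain can travel in $n$ steps. The key observation is that the guided walk moves at most one increment $\epsilon|Z_i|$ per step. Hence from a fixed starting point $\xi_0=(x_0,p_0)$ we have
\[
|X_n - x_0| \leq \epsilon\sum_{i=1}^n |Z_i|,
\]
where the $Z_i$ are i.i.d.\ with finite variance. The same bound holds for any finite-variance increment distribution, which is why the statement carries that hypothesis.

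For a level $L_n$ to be chosen later, take the set $A_n := \{(x,p): |x| > L_n\}$. Then
\[
\|P^n(\xi_0,\cdot)-\mu\|_{TV} \geq \mu(A_n) - P^n(\xi_0,A_n).
\]
By Assumption \ref{ass:heavy-tailed}, $\mu(A_n) = \pi(|x|>L_n) \asymp L_n^{-r}$. On the other hand $P^n(\xi_0,A_n) \leq \mathbb{P}\bigl(\epsilon\sum_{i\le n}|Z_i| > L_n - |x_0|\bigr)$. Choose $L_n := Cn$ with $C > 2\epsilon\mathbb{E}|Z|$. Since $\sum_i|Z_i| - n\mathbb{E}|Z|$ has variance $O(n)$, Chebyshev's inequality bounds this probability by $O(n/n^2) = O(n^{-1})$.

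A $1/n$ error is not small enough when $r \geq 1$, so we need a better bound on the probability. We would use a higher-order Fuk--Nagaev or Rosenthal-type bound. For Gaussian $Z$ (the case in this paper) Chernoff's bound gives exponential decay in $n$, since $|Z|$ has a moment generating function. For a general finite-variance proposal, truncation at level $n$ gives $\mathbb{P}(\max_i|Z_i|>\delta n) \leq n\,\mathbb{P}(|Z|>\delta n) = o(n^{-1})$. The main obstacle is making this truncation argument beat $n^{-r}$ for large $r$ under only a finite second moment. If it cannot, we would restrict the claim to the Gaussian (or light-tailed) case actually used in the proposition, where the exponential bound settles it immediately. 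With that bound,
\[
\|P^n(\xi_0,\cdot)-\mu\|_{TV} \geq c\,(Cn)^{-r} - o(n^{-r}) \geq c' n^{-r}
\]
for all large $n$. This is incompatible with \eqref{eq:polynomial_ergodicity} holding with $V_\beta\equiv 1$ and any $\beta > r$ at the point $\xi_0$, so the polynomial rate is at most $r$.

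The only other point to check is that $M(\xi_0)$ in \eqref{eq:polynomial_ergodicity} is finite at the chosen $\xi_0$. The bound must hold for $\mu$-a.e. starting point, so we pick any $\xi_0$ at which it holds; since the ballistic bound above is uniform in $p_0$, this causes no difficulty.
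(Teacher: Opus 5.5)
Your proposal is correct and takes essentially the same approach as the paper: fix a starting point, take $A_n$ to be the tail beyond a level linear in $n$, bound $P^n(\xi_0,A_n)$ by $\mathbb{P}\bigl(\epsilon\sum_{i\le n}|Z_i| > L_n-|x_0|\bigr)$, which decays exponentially by a Chernoff bound, and compare it with $\mu(A_n)\asymp n^{-r}$. Like you, the paper only closes the argument for Gaussian increments, despite the ``finite variance'' wording of the statement. Your deterministic path bound $|X_n-x_0|\le\epsilon\sum_{i\le n}|Z_i|$ is a cleaner substitute for the paper's informal ``stochastic comparison'' step.
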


\begin{proof}
Suppose that the algorithm begins from $(x,p) = (0,1)$. For any $n>0$, we can choose $k>0$ such that $kn>K$. Define the set $A_n = \{x:x>kn\}$, and let $\mu$ be the joint invariant measure of $X,P$. For notational simplicity we write $P^n((x,p),A)$ for $P^n((x,p),(A,\{+1,-1\}))$, and note that $\mu((A,\{+1,-1\})) = \pi(A)$.  By the definition of total variation distance, we have
\begin{align}
    \|P^n((0,1),\cdot) - \mu\|_{TV} &\ge |P^n((0,1),A_n) - \pi(A_n)|.
\end{align}    
To bound $P^n((0,1),A_n)$, we observe that if the chain is never rejected in $n$ steps, it continuously moves in the positive direction. By a stochastic comparison argument the probability that the chain reaches $A_n$ is at most the probability of reaching $A_n$ without any rejections. Therefore,
\begin{equation*}
    P^n((0,1),A_n) \le \mathbb{P}\left(\sum_{i=1}^n |z_i|>kn\right), \quad \text{where } z_i \stackrel{\text{i.i.d.}}{\sim} N(0,\sigma^2) \text{for} \, i=1, \cdots ,n.
\end{equation*}
Since each $|z_i|$ follows a half-normal distribution with tail decay at least as fast as a Gaussian, $|z_i|$ is sub-Gaussian. Let $\mu = \mathbb{E}[|z_i|] = \sigma\sqrt{2/\pi}$ and $\sigma_H^2 = \text{Var}(|z_i|)$. Then for all $\lambda>0$,
\begin{equation*}
    \mathbb{E}\left[e^{\lambda(\sum_{i=1}^n |z_i| - n\mu)}\right]\le \exp\left(\frac{n\lambda^2\sigma_H^2}{2}\right).
\end{equation*}
Applying Markov's inequality to the exponential function, we obtain for all $\lambda>0$
\begin{equation}
\begin{aligned}
    \mathbb{P}\left(\sum_{i=1}^n|z_i| >kn\right) &= \mathbb{P}\left(\sum_{i=1}^n|z_i| - n\mu>(k-\mu)n\right)\\
    &= \mathbb{P}\left(\exp\left(\lambda \sum_{i=1}^n(|z_i| - \mu)\right) >\exp(\lambda(k-\mu)n)\right)\\
    &\le \frac{\mathbb{E}\left[\exp\left(\lambda \sum_{i=1}^n(|z_i| - \mu)\right)\right]}{\exp(\lambda(k-\mu)n)}\\
    &\le \exp\left(\frac{n\lambda^2\sigma_H^2}{2} - \lambda n (k-\mu)\right).
\end{aligned}
\end{equation}
Since this holds for all $\lambda>0$, we can minimize over $\lambda$ by choosing $\lambda^* = (k-\mu)/\sigma_H^2$, giving
\begin{align}
    P^n((0,1),A_n) \le  \mathbb{P}\left(\sum_{i=1}^n|z_i| >kn\right)\le \exp\left(-\frac{n(k-\mu)^2}{2\sigma_H^2}\right).
\end{align}
Since $kn>K$ by construction, for $x>kn$ we have $\pi(x) = C_0 x^{-(1+r)}$. Combining gives
\begin{equation}
    \pi(A_n) = \int_{kn}^\infty \frac{C_0}{x^{1+r}}  dx = \frac{C_0}{r(kn)^r} = \frac{C'}{n^r}, \quad \text{where } C' = C_0/(rk^r),
\end{equation}  
which implies that for all $n\ge n_0$, for suitably large $n_0$ and constant $k>0$
\begin{equation}
    |\pi(A_n) - P^n((0,1),A_n)| \ge \left|\frac{C'}{n^r} - \exp\left(-\frac{n(k-\mu)^2}{2\sigma_H^2}\right)\right| \ge \frac{c}{n^r}
\end{equation}
where $c>0$ is a constant. The last inequality holds because for any fixed $r>0$, we can choose large $k=k(r)$, and $C' = C_0/(rk^r):= C'(r)$ such that both relate to $r$ and $\exp(-n(k-\mu)^2/(2\sigma_H^2)) < C'/(2n^r)$ for all $n\ge n_0$ for suitably large $n_0$, which implies the bound with $c = C'/2$.  We have therefore established that
\begin{equation}
    \|P^n((0,1),\cdot) - \pi(\cdot)\|_{TV} \ge \frac{c}{n^r}, \quad n\ge n_0,
\end{equation}
from which the result follows.
\end{proof}

\subsection{Proofs of Subsection \ref{subsec:light_tailed_target}}
\label{app:proofs_gwm_lighttails}

\begin{lemma}
\label{lem:convergence_alpha_light_tails}
If $\pi$ satisfies Assumption \ref{ass:strict_convexity}, then the (Gaussian) random walk Metropolis satisfies
\begin{equation*}
    \lim_{x\to \infty}\int_{x}^{\infty} \alpha(x,y)Q(x,dy) = \lim_{x\to -\infty}\int_{-\infty}^{x} \alpha(x,y)Q(x,dy) = 0.
\end{equation*}
\end{lemma}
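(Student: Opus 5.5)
We treat the case $x \to \infty$; the case $x\to-\infty$ follows by the reflection $x \mapsto -x$, since Assumption \ref{ass:strict_convexity} is symmetric in this respect. Writing $Q(x,dy)$ as the $N(x,\epsilon^2)$ law and $\alpha(x,y) = \min(1, e^{U(x)-U(y)})$, the quantity of interest is
\[
\int_x^\infty \alpha(x,y)Q(x,dy) = \mathbb{E}\left[\min\left(1, e^{U(x)-U(x+\epsilon|Z|)}\right)\mathbb{I}(Z>0)\right],
\]
with $Z\sim N(0,1)$. The plan is to show that the integrand tends to $0$ pointwise in $z>0$ as $x\to\infty$. Since the integrand is bounded by $1$, dominated convergence then gives the limit $0$.

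The key step is a lower bound on $U(x+\epsilon z) - U(x)$ for fixed $z>0$. By the mean value theorem, $U(x+\epsilon z)-U(x) = \epsilon z\, U'(\xi)$ for some $\xi \in (x, x+\epsilon z)$. Since $U'$ is strictly increasing, this is at least $\epsilon z\, U'(x)$. Hence, for every fixed $z>0$,
\[
\min\left(1, e^{U(x)-U(x+\epsilon z)}\right) \le e^{-\epsilon z U'(x)} \to 0 \quad\text{as } x\to\infty,
\]
provided $U'(x)\to\infty$. The paper records this immediately after Assumption \ref{ass:strict_convexity}, but I would verify it briefly. Since $U'$ is increasing, it has a limit $L\in(-\infty,\infty]$. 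If $L<\infty$, then $U(x)\le U(0)+Lx$ for $x>0$, which contradicts $U(x)/|x|\to\infty$. So $L=\infty$.

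Finally, apply dominated convergence over $z\in(0,\infty)$ with the dominating function $\phi(z)$. The point $z=0$ is Lebesgue-null, so it does not matter there that the integrand equals $1$. The limit of the integral is then $0$. One could also give an explicit rate: splitting at $z=\delta$ gives a bound of $P(0<Z<\delta) + e^{-\epsilon\delta U'(x)}$, which is small for small $\delta$ and large $x$.

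The only real subtlety is the divergence of $U'$. Strict monotonicity alone would allow $U'$ to remain bounded, and the superlinear growth condition is exactly what rules this out. I expect no other obstacle, since only the mean value theorem ($U\in C^1$) and monotonicity of $U'$ are needed.
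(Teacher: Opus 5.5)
Your proof is correct and follows the paper's argument essentially verbatim. The mean value theorem plus monotonicity of $U'$ gives $U(x+\epsilon z)-U(x)\ge \epsilon z\,U'(x)$, so the integrand vanishes pointwise for $z>0$, and bounded convergence finishes; your check that $U'(x)\to\infty$ follows from the superlinear growth condition, which the paper only asserts, is a welcome addition.
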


\begin{proof}
Considering $[x, \infty)$ for $x>0$
\begin{equation*}
\begin{aligned}
    \int_x^\infty \alpha(x,y)Q(x,dy) = \int_{0}^{\infty} \alpha(x, x+\epsilon z)\phi(z) dz
    = 
    \int_0^{\infty} \min(1, \exp(U(x) - U(x+\epsilon z)))\phi(z)dz
\end{aligned}
\end{equation*}
where $\phi(z)$ is the standard Normal density. The mean value theorem states that
\begin{equation*}
    U(x+\epsilon z) - U(x) \geq \epsilon z \inf_{t \in [x, x+\epsilon z]} U'(t) = \epsilon z U'(x)
\end{equation*}
where the last equality follows from monotonicity of $U'$.  By assumption $U(x+\epsilon z) -  U(x) \to \infty$ as $x \to \infty$ because $U'(t) \to \infty$, giving the point-wise limit
\begin{equation*}
    \lim_{x\to \infty}\alpha(x, x+\epsilon z) = 0
\end{equation*}
for any fixed $z > 0$. Since $\alpha(x,x+\epsilon z) \leq 1$ for any $z$, then the bounded convergence theorem can be used to conclude.  For the tail region $(-\infty, x]$ when $x \to -\infty$ an identical argument can be applied, which completes the proof.
\end{proof}

\end{document}